\newtheorem{Remark}{Remark}
\newtheorem{Definition}{Definition}
\newtheorem{Theorem}{Theorem}
\newtheorem{Proposition}{Proposition}
\newtheorem{Lemma}{Lemma}
\newcommand{\R}{{\mathbb R}}
\newcommand{\ff}{{\mathbb F}}
\newcommand{\bG}{{\mathbb G}}
\newcommand{\F}{{\mathcal F}}
\newcommand{\G}{{\mathcal G}}
\newcommand{\cL}{{\mathcal L}}
\newcommand{\E}{{\mathcal E}}
\newcommand{\A}{{\mathcal A}}
\newcommand{\M}{{\mathcal M}}
\newcommand{\Ss}{{\mathcal S}}
\newcommand{\Hh}{{\mathcal H}}
\def\reff#1{{\rm(\ref{#1})}}
\def \bQ{\mathbb{Q}}
\def \bP{\mathbb{P}}
\def \bE{\mathbb{E}}
\def \cW{\mathcal{W}}
\def \1{\mathds{1}}
\def \wt {\widetilde}
\def \wh {\widehat}
\newcommand{\nc}{\newcommand}
\nc{\esssup}{\mathop{\mathrm{ess\;sup}}}
\nc{\essinf}{\mathop{\mathrm{ess\;inf}}}
\begin{document}
\title{Optimization problem under change of regime of interest rate\footnote{The research of the first author
was supported  by the Sectorial Operational Programme Human
Resources Development (SOP HRD), financed from the European Social
Fund and by the Romanian Government under the contract number SOP
HRD/89/1.5/S/62988. The research of the three
 other authors is
supported by Chaire risque de cr\'edit, French Banking
Federation}}

\author{Bogdan Iftimie\\ \footnotesize{Bucharest University of   Economic Studies}, \\ \footnotesize{and ``Simion Stoilow" Institute of Romanian Academy} \\
\footnotesize{ \texttt{Bogdan.Iftimie@csie.ase.ro}} 
\and Monique Jeanblanc \\ \footnotesize{Laboratoire d'Analyse et Probabilit\'es,}\\ \footnotesize{Universit\'e d'Evry-Val d'Essonne}\\ \footnotesize{\texttt{monique.jeanblanc@univ-evry.fr}}
 \and Thomas Lim\\ \footnotesize{Laboratoire d'Analyse et Probabilit\'es,}\\ \footnotesize{Universit\'e d'Evry-Val d'Essonne}\footnotesize{ and ENSIIE}\\ \footnotesize{\texttt{lim@ensiie.fr}}
\and Hai-Nam Nguyen  \\ \footnotesize{Laboratoire d'Analyse et Probabilit\'es,}\\ \footnotesize{Universit\'e d'Evry-Val d'Essonne}  \\\footnotesize{\texttt{hainam.nguyen@univ-evry.fr}}  
}
\date{}\maketitle
\abstract{In this paper, we study the classical problem of maximization of the sum of the utility of the terminal wealth and the utility of the consumption, in a case where a sudden jump in the risk-free interest rate creates incompleteness. The value function  of the  dual problem is  proved to be solution of a  BSDE and the duality between the primal and the dual value  functions is exploited to study the BSDE associated to the primal problem.}
\vspace*{1cm}

  {\bf Mathematics Subject Classification (2000):}
91B16, 90C46, 91G30, 93E20.

{\bf Keywords:}  portfolio optimization, power utility, stochastic interest rate, dual problem,  backward stochastic
differential equations (BSDEs), enlarged filtration.

\section{Introduction}
 Many studies  in the field of Mathematical Finance   are devoted to    portfolio   and/or consumption optimization problems.
In the case of a complete market, with several risky assets and a savings account adapted to a Brownian filtration, the problem is fully solved in the monography  of Karatzas and Shreve \cite{KSDual}.
The situation in incomplete markets is more delicate, and it is not easy to give closed
 form solutions (see, e.g., Menoncin \cite{M}). The incompleteness of the market may  arise from a  number of risky assets smaller than the dimension of the driving noise,  from constraints on the portfolio, or from an interest rate which  depends on an extra noise, which will be the case in our setting. The literature about the two first cases of incompleteness is important, on the other hand the literature about the third case of incompleteness is reduced.
We can cite \cite{M} for the case of a multidimensional incomplete market (and constant interest rate) and a Brownian filtration under Markovian framework, where the author solves the problem using HJB equation. The case where the measurability of the interest rate creates incompleteness is presented in    Bauerle and   Rieder  \cite{BR} in which the dynamics of the interest rate is driven by a Markov chain.
 %
%  Hu, Imkeller and M\"{u}ller \cite{HIM} solve a portfolio optimization problem, with a constraint   on trading strategies, for logarithmic, power  and exponential utilities, using the  BSDEs  approach. The BSDE approach is  also the main tool in
%Jiao and Pham (\cite{JP}) who are dealing  with a portfolio optimization problem  in a  financial market consisting in a savings account (with constant  interest rate) and a stock driven by a Brownian motion which  suffer also  a finite number of unpredicted  shock  (implying the incompleteness of the market)  %under the  density hypothesis,
%for HARA utility function. In the case of a single jump, the original problem is decomposed in two subproblems: before the jump and after the jump, which are stated in complete markets, the latter being solved by martingale duality and the former by dynamic programming approach.
%
%
%

 %%The case of stochastic  interest rates is presented in \cite{KL}, Korn .. (compl\`eter)  in the case where this does not create incompleteness

A classical tool to solve utility maximization problem is the dual approach. This one consists in solving  an auxiliary optimization problem, called the dual problem, which is defined on the set of all equivalent martingale measures. The list of papers studying that problem is long and we quote only few of them.  This  approach is used in the case of incomplete markets generated by a savings account (with constant interest rate) and several stocks (represented by general semi-martingales) for HARA utility,   by  Kramkov and Schachermayer \cite{KS}. They  state an existence and uniqueness result for the final optimal wealth (associated to an investment problem), but no explicit formulas are provided.  Rogers \cite{Ro} formulates an abstract theorem in which the value function of the  utility maximization problem and   the value function for the associated dual problem  satisfy a bidual relation.  As it is mentioned, this procedure  can be applied for a wide class of portfolio  and/or consumption optimization problems.   Casta\~{n}eda-Leyva and Hern\'{a}ndez-Hern\'{a}ndez \cite{CH}  deal with a combined investment and consumption optimization problem with a single risky asset, in a Brownian framework, and where the  coefficients of the model (including the interest rate) are deterministic functions  of some external economic factor process.\\

%The dual problem can be constructed and the link between the value functions of the primal, respectively the dual can be established by means of general abstract results.

Here, we are  concerned with  the problem of maximization of expected power utility of both terminal wealth and   consumption, in a market with investment opportunities in a savings account with  a  stochastic interest rate,  which  suffers an unexpected shock at some random time $\tau$, and a stock modeled by  a semi-martingale driven by a Brownian motion. The unexpected shock can for example be due to some serious macroeconomic issue. This one implies that the market is incomplete. The problem will be solved in the   filtration generated by prices (of stock and savings account)   so that the change of regime time  $\tau$ is a stopping time,
under the immersion hypothesis between  the filtration generated by the stock and the general filtration.

Using standard results of duality, the original optimization problem (called the {\it primal} problem) is linked to the {\it dual} problem, in which the control parameters  take value in the set of equivalent martingale measures. Then, we prove, by using a similar approach to the one used in Hu \emph{et al.} \cite{HIM} for the case of the primal problem without consumption and more recently in Cheridito and Hu \cite{Cheridito} for the case with consumption, that the value function of that problem is solution of a particular BSDE, involving one  jump. Using a recent result of  Kharroubi and Lim  \cite{KL},  we show that this BSDE has a unique solution.
Then, we give the optimal portfolio and consumption in terms of the solution of this BSDE, and    explicit formula   for the optimal wealth process.  We also establish a duality result for the dynamic versions of the value functions associated to primal and dual optimization problems  which allows us to prove that the BSDE associated to the primal problem has a unique solution. To the best of our knowledge, the BSDE methodology has not  been used yet for dual problems in the literature.\\

The paper is organized as follows. In Sections \ref{set up} and \ref{market}, we describe the set up and model. In Section \ref{probleme dual}, we characterize the set of the equivalent martingale measures, then we derive and solve the dual optimization problem. Finally, Section \ref{retour primal} is dedicated to the link between the value functions associated to the primal and dual optimization problems and to the  computation of  explicit formulas for the optimal wealth process, optimal trading and consumption policies.

\section{Set up} \label{set up}
Throughout this paper $(\Omega, \mathcal{G}, \bP)$ %$(\Omega, \cal{A}, P) P \M(P)$
is a   probability space on which is defined a  one dimensional
Brownian motion $(B_t)_{t\in[0, T]}$ where $T < \infty$ is the
terminal time. We denote by $\ff:=(\F_t)_{t\in[0, T]}$    the
natural filtration of  $B$ (augmented by the $\bP$-null sets)  and we
assume that $ \F_T
 \varsubsetneq {\cal {G}}$. On the same probability space is given a finite
positive $\cal G$-measurable random variable $\tau$ which is
interpreted as a random time associated to some unpredicted
evolution (with respect to the filtration $\ff$) in the dynamics of the interest rate or to a switching
regime. Let $H$ be the c\`adl\`ag process equal to $0$ before $\tau$ and $1$ after $\tau$, i.e., $H_t :=
\mathds{1}_{\tau \leq t}$. We introduce the filtration $\bG$ which is the smallest right-continuous extension of $\ff$ that makes $\tau$   a $\bG$-stopping
time. More precisely $\bG:=(\G_t)_{t\in[0, T]}$, where $\G_t$ is defined for any $t\in [0,T]$ by
$$\G_t :=  \bigcap_{\epsilon>0}   \G^0_{t+\epsilon} \;,$$
 where $\G^0_{t} := \F_t\vee
\sigma(H_u\;,u\in[0,t])$, for any $t\in[0, T]$.
Throughout the sequel, we assume the following classical hypotheses.%Our assumptions about the model are the following
\begin{itemize}

\item[{\bf (H1)}]  Any $\ff$-martingale is a $\bG$-martingale,
i.e., $\ff$ is immersed in $\bG$.

\item[{\bf (H2)}] The process $H$ admits an
absolutely continuous  compensator, i.e., there exists a
non-negative $\bG$-adapted process $\lambda^\bG$, called the
$\bG$-intensity, such that the compensated process $M$ defined by
$$M_t := H_t - \int_0^t \lambda_s^\bG ds\;,$$
 is a $\bG$-martingale.
Note that the process $\lambda^\bG$ vanishes after $\tau$, and we can write
$\lambda _t^\bG= \lambda ^\ff_t\mathds{1}_{t < \tau}$ where
$\lambda ^\ff$ is an $\ff$-adapted process,   called the $\ff$-intensity of the process $H$. We assume that $\lambda^\bG$ is uniformly bounded, hence
$\lambda^\ff$ is also uniformly bounded. The existence of $\lambda^\bG$
implies that $\tau$ is not an $\ff$-stopping time (in fact, $\tau$
avoids $\ff$-stopping times and is a totally inaccessible $\bG$-stopping time). %\nhn{$\lambda^\bG, \lambda^\ff, \lambda, \lambda^0$???}
\end{itemize}
We recall  in this framework the standard
decomposition of any $\bG$-predictable process $ \psi $ which is given by    Jeulin \cite[Lemma 4.4]{Jeulin}.
\begin{Lemma}
Any $\bG$-predictable process $ \psi $ can be decomposed under the following form
\begin{equation*}
\psi_t = \psi^0_t \mathds{1}_{t \leq \tau} + \psi^1_t(\tau)
\mathds{1}_{ t > \tau }\;,
\end{equation*}
where the process $\psi^0$ is $\ff$-predictable, and for fixed
non-negative $u$, the process $\psi^1_{\cdot}(u)$
 %(indexed over $u$)
 is  $\ff$-predictable. Furthermore, for any fixed $t \in [0,T]$, the
mapping $\psi^1_t(\cdot)$ is
 $\F_t \otimes \mathcal{B}([0, \infty))$-measurable. Moreover, if the process $\psi$ is uniformly bounded, then it is possible to choose bounded processes $\psi^0$ and $\psi^1(u)$.
\end{Lemma}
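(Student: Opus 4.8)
The plan is to prove the decomposition by a functional monotone class argument, whose only genuinely probabilistic ingredient is the structure of the $\sigma$-field $\G_s$ in the progressively enlarged filtration. First I would record that the $\bG$-predictable $\sigma$-field $\mathcal P(\bG)$ on $\Omega\times[0,T]$ is generated by the multiplicative class $\mathcal C$ of ``elementary'' processes $\mathds 1_{]s,t]}(\cdot)\,\xi$ with $0\le s<t\le T$ and $\xi$ bounded and $\G_s$-measurable, together with the processes $\mathds 1_{\{0\}}(\cdot)\,\xi$ with $\xi$ bounded and $\G_0$-measurable. The key lemma I would then establish, by a monotone class argument at fixed $s$, is the standard description of $\G_s$-measurable variables in a progressive enlargement: every bounded $\G_s$-measurable $\xi$ can be written
\[
\xi=\xi^0\,\mathds 1_{\{s<\tau\}}+\xi^1(\tau)\,\mathds 1_{\{\tau\le s\}}
\]
with $\xi^0$ bounded and $\F_s$-measurable and $(\omega,u)\mapsto\xi^1(\omega,u)$ bounded and $\F_s\otimes\mathcal B([0,\infty))$-measurable; here one uses that $\G^0_s=\F_s\vee\sigma(H_u,\,u\le s)$ is generated by $\F_s$ together with $\tau\wedge s$ and the event $\{\tau\le s\}$, and the passage from $\G^0_s$ to the right-continuous $\G_s$ causes no difficulty since $\ff$ is right-continuous and $\tau$ avoids $\ff$-stopping times.

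Next I would substitute this expression for $\xi$ into an elementary process $\mathds 1_{]s,t]}(\cdot)\,\xi$ and split, on $]s,t]$, according to the events $\{r\le\tau\}$ and $\{r>\tau\}$ (recall $r>s$ there); a direct computation rewrites the process in the form $\psi^0_\cdot\,\mathds 1_{\{\cdot\le\tau\}}+\psi^1_\cdot(\tau)\,\mathds 1_{\{\cdot>\tau\}}$, with $\psi^0$ $\ff$-predictable and $\psi^1$ enjoying the two measurability properties claimed in the statement; the time-$0$ term is trivial because $\tau>0$. I would then let $\mathcal H$ be the set of bounded $\bG$-predictable processes admitting such a decomposition and check that $\mathcal C\subset\mathcal H$, that $\mathcal H$ is a vector space containing the constants, and that $\mathcal H$ is stable under bounded monotone limits: if $0\le\psi_n\uparrow\psi$ with $\psi_n=\psi^0_n\mathds 1_{\{\cdot\le\tau\}}+\psi^1_n(\tau)\mathds 1_{\{\cdot>\tau\}}$, then the choices $\psi^0:=\limsup_n\psi^0_n$ and $\psi^1_t(u):=\limsup_n\psi^1_{n,t}(u)$ retain $\ff$-predictability and $\F_t\otimes\mathcal B([0,\infty))$-measurability and reconstruct $\psi$, since the identity to be verified only constrains $\psi^0$ on $\{\cdot\le\tau\}$ and $\psi^1_t(\cdot)$ at the value $\tau$. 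The functional monotone class theorem then yields that $\mathcal H$ contains all bounded $\mathcal P(\bG)$-measurable processes. For a general $\bG$-predictable $\psi$ I would apply this to $\psi^{(n)}:=(\psi\wedge n)\vee(-n)$ and put $\psi^0:=\lim_n\psi^{0,(n)}$, $\psi^1:=\lim_n\psi^{1,(n)}$ on the set where these limits exist (and $0$ elsewhere). Finally, for the boundedness clause, since $\psi_t=\psi^0_t$ on $\{t\le\tau\}$ and $\psi_t=\psi^1_t(\tau)$ on $\{t>\tau\}$, one may replace $\psi^0$ by $(\psi^0\wedge C)\vee(-C)$ and $\psi^1_t(u)$ by $(\psi^1_t(u)\wedge C)\vee(-C)$ whenever $|\psi|\le C$ without changing $\psi$, which produces bounded components.

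The hard part will be the lemma on $\G_s$ together with the measurability bookkeeping that accompanies it: one must produce a single function $\psi^1$ simultaneously carrying the two distinct measurability properties — $\ff$-predictability in $(\omega,t)$ for each fixed $u$, and $\F_t\otimes\mathcal B([0,\infty))$-measurability in $(\omega,u)$ for each fixed $t$ — and propagate both consistently through the monotone class limits, in effect on the extended parameter space $\Omega\times[0,T]\times[0,\infty)$. Everything else is routine verification.
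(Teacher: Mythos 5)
Your proposal is correct, but note that the paper does not prove this lemma at all: it is stated as a recalled result, with an explicit citation to Jeulin (Lemme 4.4 of \emph{Semi-martingales et grossissement d'une filtration}), so there is no in-paper argument to compare against. What you have written out is essentially the standard proof that lies behind that citation: reduce to the generating class of elementary predictable processes $\mathds{1}_{]s,t]}\xi$ with $\xi$ bounded and $\G_s$-measurable, use the description of $\G_s$-measurable variables as $\xi^0\mathds{1}_{\{s<\tau\}}+\xi^1(\tau)\mathds{1}_{\{\tau\le s\}}$, observe that on $]s,t]$ the set $\{r\le\tau\}$ forces $\tau>s$ so the process there is the $\ff$-predictable $\mathds{1}_{]s,t]}\xi^0$, and close up by a functional monotone class argument. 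The two points you flag as delicate are indeed the only ones that need care, and your treatment of each is adequate: the passage from $\G^0_s$ to $\G_s$ is harmless because $\mathcal P(\bG)=\mathcal P(\bG^0)$ (write $]s,t]\times A$, $A\in\G_s\subset\G^0_{s+\epsilon}$, as an increasing union of $]s+\epsilon_n,t]\times A$), and the $\limsup$ device for the monotone limits works precisely because the identity constrains $\psi^0$ only on $\{\cdot\le\tau\}$ and $\psi^1_t(\cdot)$ only at $u=\tau$, where the components coincide with $\psi_n$ and hence converge. The same remark justifies your final truncation argument for the boundedness clause and the patching together of the truncated decompositions $\psi^{0,(n)}$ for unbounded $\psi$. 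I see no gap.
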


\begin{Remark}The process $(\exp (aB_t-\frac 12 a^2t))_{t\in[0, T]}$ being an
$\ff$-continuous martingale for every real number $a$, the immersion property implies that it is a
$\bG$-continuous martingale,  hence $B $ is a
$\bG $-Brownian motion. It follows that the stochastic integral $\int \vartheta_sdB_s$ is well defined for a $\bG$-adapted process $\vartheta$ (up to integrability conditions, e.g. if $\vartheta$ is bounded) and that this integral is a $\bG$ local-martingale.
\end{Remark}

We define the following   spaces which will be used throughout this paper.

\begin{itemize}
\item
$\Ss^\infty_{\ff}(u,T)$ (resp. $\Ss^\infty_{\bG}(u,T)$) denotes the
 set of $\ff$ (resp. $ \bG $)-progressively measurable processes
$X$ which are essentially bounded on $[u,T]$,
i.e., such that
$$
\esssup_{t \in [u, T]} |X_t| <\infty \;;
$$

\item
$\Ss^{\infty, +}_{\ff}(u,T)$ (resp. $\Ss^{\infty, +}_{\bG}(u,T)$) denotes the subset of $\Ss^\infty_{\ff}(u,T)$ (resp. $\Ss^\infty_{\bG}(u,T)$)
  such that  $X_t\geq C$ for a positive constant $C$;

\item $\Hh^2_{\ff}(u,T)$ (resp. $\Hh^2_{\bG}(u,T)$) denotes the
set of square integrable $\ff$ (resp. $\bG$)-predictable processes
$X$ on $[u,T]$, i.e.,
$$ \| X \|^2_{\Hh^2(u,T)} :=  \bE \Big(   \int_u^T \vert X_t\vert^2 dt \Big)    <
\infty \;;
$$%
%$$(resp. \quad \| X
%\|^2_{\Hh^2_{\bG}([0,T])} := E \left(   \int_0^T X_t^2 dt \right)    <
%\infty;)
%$$

\item $\Hh^2_{\bG}(M)$ denotes the set {of   $\bG$-predictable
processes $X $ on $[0, T]$ such that}
$$
\| X \|^2_{\Hh^2_{\bG}(M)} := \bE \Big(   \int_0^T
\lambda_t^\bG\vert X_t\vert^2dt     \Big)    < \infty \;.
$$
\end{itemize}

\section{Model}\label{market}
The financial market consists in a savings account with a stochastic interest rate with dynamics
\begin{equation*}
d S^0_t = r_t S^0_t dt\;, \quad S^0_0 =1\;,
\end{equation*}
where $r$ is a non-negative $\bG$-adapted process, and a risky asset whose price process $S$ follows the dynamics
\begin{equation*}
d S_t = S_t (\nu_t dt  + \sigma_t d B_t) \;.
\end{equation*}
Our assumptions about the market are the following
\begin{itemize}
\item[{\bf (H3)}] $r$ is a $\bG$-adapted process of the form
$$
r_t=r^0_t \mathds{1}_{t < \tau} + r^1_t(\tau)\mathds{1}_{t \geq \tau} \;,
$$
where $r^0$   is a non-negative uniformly bounded $\ff$-adapted
process, and for any fixed non-negative $u$, 
$r^1_{\cdot}(u)$ is a non-negative uniformly bounded $\ff$-adapted process,  and for
fixed $t\in[0, T]$, the mapping $r^1_t(\cdot)$ is $\F_t \otimes \mathcal{B}([0,
\infty))$-measurable.

\item[{\bf (H4)}] $\nu$ and $\sigma$ are $\ff$-adapted processes,
and there exists a positive constant $C$ such that
$|\nu_t| \leq C$ and  $\frac{1}{C} \leq \sigma_t \leq C$,  $
t \in [0, T]$, $\bP$ - a.s.
\end{itemize}
Throughout the sequel, we use the notation $R$ for the discount factor defined by $R_t  := e ^{-\int_0^t r_s ds}$ for any $t\in [0,T]$.\\ 

We now consider an investor acting in this market, starting with an initial amount $x > 0$ and we denote by $\pi^0$ and $\pi$ the part of wealth invested in the savings account and in the risky asset, and by $c$ the associated instantaneous consumption process. Obviously we have the relation $\pi^0_t = 1 - \pi_t$. We denote by $X^{x,\pi,c}$ the wealth process associated to the strategy $(\pi, c)$ and the initial wealth $x$, and we assume that the strategy is self-financing, %\textcolor{red}{(?)}
which leads to the equation
\begin{equation}\label{wealtheq}
\left\{\begin{array}{rcl}
X_0^{x,\pi,c} &= & x\;, \\
d X_t^{x, \pi,c} &=& X_t^{x,\pi,c}  \big[ (r_t  + \pi_t (\nu_t - r_t) ) dt +  \pi_t \sigma_t d B_t \big] -c_tdt\;.
\end{array}\right.
\end{equation}  %Throughout the sequel,
We consider   the set $\A(x)$ of
the admissible strategies defined below.
\begin{Definition}\label{DefinitionAdmissibleStrategies}
The set $\A(x)$ of admissible strategies $(\pi,c)$ consists in $\bG$-predictable processes $(\pi,c)$ such that  $\bE (\int_0^T
|\pi_s\sigma_s|^2 ds)<\infty$, $c_t \geq 0$ and $X^{x,\pi,c}_t >0$ for any $t \in [0,T]$.
\end{Definition}

We are interested in solving the classical problem of utility maximization defined by
\begin{equation}\label{pb}
V(x) := \sup_{(\pi,c) \in \A(x)} \bE \Big[ \int_0^T U(c_s) ds + U(X^{x, \pi, c}_T) \Big] \;,
\end{equation}
where the utility function $U$ is $U(x)=x^p/p$ with $p \in (0,1)$. %$V_0$ is the value function of this optimisation problem.
%If $\pi$ is an admissible strategy,  we set
%$$
%J^{\pi}(x) = E \left[ U \left(X_T^{x, \pi} \right) \right],
%$$
%and the value function of the  optimization problem  is
%$$
%V(x) = \sup_{\pi \in \A} J^{\pi}(x).
%$$

\section{Dual approach} \label{probleme dual}
To prove that there exists an optimal strategy to the problem
(\ref{pb}), we use the dual approach introduced by Karatzas \emph{et al.} \cite{KLS} or Cox and Huang \cite{CH89}.%Kramkov and
%Schachermayer \cite{KS}.% \mb{sans doute aussi El K. Quenez}

For that, we introduce the convex conjugate function $ \wt U $ of
the utility function $U$, which is defined by
$$
 \wt U  (y) := \sup_{x > 0} (U(x) - xy)\,,\quad \; y > 0\;.
$$
The supremum is attained at the point $I(y) := (U')^{-1}(y)$ and a
direct computation shows that  $I(y) = y^{\frac{1}{p-1}}$ and $
\wt U (y) =  - \frac{y^q}{q}$
where $q:=\frac{p}{p-1}<0$. We also have the conjugate relation
\begin{equation}\label{relation:conjugate:UtoUtilde}
    U(x) = \inf_{y>0} (\wt U (y)  + xy) \;, \quad x > 0 \;.
\end{equation}
Before studying the dual problem, we characterize the set of  equivalent
martingale measures which is used to introduce the dual problem.

\subsection{Characterization of the set of equivalent
martingale measures}  \label{MP}
%Our first goal is to characterize
The set $\M(\bP)$ of  equivalent
martingale measures (e.m.m.) is
$$\M(\bP):= \{\bQ ~ | ~ \bQ \sim \bP, RS \text{ is a } (\bQ,\bG)-\text{local martingale  }\}.$$
The dynamics of the discounted price of the risky
asset $\tilde{S} := R S$ is given by %derived via It\^o's formula
\begin{equation} \label{discontS}
d \tilde{S}_t = %\tilde{S}_t \big((\nu_t - r_t) dt + \sigma_t dB_t\big) =
\sigma_t \tilde{S}_t (dB_t + \theta_t dt)\;,
\end{equation}  where $\theta_t := \frac{\nu_t-r_t}{\sigma_t}$ is
the risk premium.

Let $\bQ$ be a probability measure equivalent to $\bP$, defined by its Radon-Nikodym density $$
{d\bQ} \big|_{\G_t} = L_t ^\bQ{d\bP}\big|_{\G_t} \;,$$ where $L^\bQ$ is a positive
$\bG$-martingale with $L_0^\bQ=1$.

According to  the Predictable Representation Theorem (see Kusuoka
\cite{Ku}), and using the fact that $L^\bQ$ is positive, there
exists a pair $(a, \gamma)$ of $\bG$-predictable processes satisfying
$\gamma_t > -1$ for any $t \in [0,T]$ such that
%any square-integrable $\bG$-martingale can be written as the sum of its initial value, a continuous martingale, given by a stochastic integral w.r.t. the Brownian motion, and a discontinuous martingale, given by a stochastic integral w.r.t. the compensated process $M$. Then, the process $L^Q$ can be represented as
%\begin{equation}
%L_t^Q = 1 + \int_0^t \tilde{a}_s dB_s + \int_0^t \tilde{\gamma}_s d M_s,
%\end{equation}
%with some $\bG$-predictable processes $\tilde{a}, \tilde{\gamma}$. Since $L_t^Q$ is positive, by setting  $a_t = \frac{\tilde{a}_t}{L_{t^-}^Q}$ and $\gamma_t = \frac{\tilde{\gamma}_t}{L_{t^-}^Q}$, we obtain the following representation
$$
dL_t^\bQ =   {L_{t^-}^\bQ} ( a_t dB_t + \gamma_t d M_t) \;.
$$
From %\textcolor{red}{>From}
Girsanov's theorem, the process $\widehat B$ defined by
$$
\widehat{B}_t := B_t - \int_0^{t} a_s ds
$$
is a $(\bQ,\bG)$-Brownian motion, and the process $\widehat M$ defined
by
$$
\widehat{M}_t := M_t - \int_0^{t} \gamma_s  \lambda_s^\bG ds = H_t -
\int_0^{t} (1 + \gamma_s)  \lambda_s^\bG ds
$$
is a $(\bQ,\bG)$-discontinuous martingale, orthogonal to $\widehat{B}$. %\textcolor{red}{est-ce utile et vrai?}

Using \eqref{discontS}, we notice that if a
probability measure $\bQ$ is an e.m.m., %risk-neutral probability
then
$a_t=-\theta_t$ for any $t \in [0,T]$.
% and  $\gamma_t $ is any $\bG$-predicable process with $\gamma_t > -1$.

\begin{Lemma} \label{EMMdens}
The set $\M(\bP)$ is determined by all the probability measures $\bQ$ equivalent
to $\bP$, whose Radon-Nikodym density process has the form
\begin{equation*}
L_t^\bQ = \exp \Big( - \int_0^t  \theta_s d B_s   - \frac{1}{2}
\int_0^t \vert\theta_s\vert^2 ds  + \int_0^t \ln(1+
\gamma_s) d H_s   -
 \int_0^t   \gamma_s \lambda_s^\bG ds   \Big)\;,
\end{equation*}
where $\gamma$ is a $\bG$-predictable process satisfying $\gamma_t
> -1$.
\end{Lemma}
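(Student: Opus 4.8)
The plan is to combine the representation of the density already recorded before the statement with the explicit integration of the linear SDE it solves. Since $L^\bQ$ is a positive $\bG$-martingale with $L_0^\bQ=1$, Kusuoka's Predictable Representation Theorem supplies $\bG$-predictable processes $(a,\gamma)$ with $\gamma_t>-1$ such that $dL_t^\bQ=L_{t^-}^\bQ(a_t\,dB_t+\gamma_t\,dM_t)$. First I would show that the e.m.m.\ requirement pins down $a$: by Girsanov's theorem $\widehat B_t=B_t-\int_0^t a_s\,ds$ is a $(\bQ,\bG)$-Brownian motion, so \eqref{discontS} becomes $d\tilde S_t=\sigma_t\tilde S_t\big(d\widehat B_t+(a_t+\theta_t)\,dt\big)$. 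As $\tilde S=RS$ is continuous and $\sigma$ is bounded away from $0$ by \textbf{(H4)}, this process is a $(\bQ,\bG)$-local martingale if and only if its finite-variation part vanishes, i.e.\ $a_t=-\theta_t$, $dt\otimes d\bP$-a.e.\ No constraint on $\gamma$ appears: the jump part of $L^\bQ$ only modifies the compensator of $H$ (giving the martingale $\widehat M$, orthogonal to $\widehat B$) and does not interact with the continuous price $\tilde S$; this free parameter $\gamma$ is exactly the incompleteness of the model.

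It then remains to integrate the linear SDE for $L^\bQ$, whose unique solution is the stochastic (Dol\'eans-Dade) exponential $\mathcal{E}\big(\int_0^\cdot a_s\,dB_s+\int_0^\cdot\gamma_s\,dM_s\big)$. Splitting the driving martingale into its continuous part $\int_0^\cdot a_s\,dB_s$ (with bracket $\int_0^\cdot a_s^2\,ds$) and its purely discontinuous part $\int_0^\cdot\gamma_s\,dM_s=\int_0^\cdot\gamma_s\,dH_s-\int_0^\cdot\gamma_s\lambda_s^\bG\,ds$, the exponential formula gives
\[
L_t^\bQ=\exp\Big(\int_0^t a_s\,dB_s-\tfrac12\int_0^t a_s^2\,ds-\int_0^t\gamma_s\lambda_s^\bG\,ds\Big)\prod_{0<s\le t}(1+\gamma_s\Delta H_s),
\]
since the factors $e^{-\gamma_s\Delta H_s}$ from the Dol\'eans-Dade product cancel the contribution $\int_0^t\gamma_s\,dH_s$. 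Because $H$ has a single jump, of size $1$ at $\tau$, one has $\prod_{0<s\le t}(1+\gamma_s\Delta H_s)=\exp\big(\int_0^t\ln(1+\gamma_s)\,dH_s\big)$, and substituting $a_t=-\theta_t$ yields exactly the announced expression.

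For the reverse inclusion I would fix any $\bG$-predictable $\gamma$ with $\gamma_t>-1$, define $L^\bQ$ by the displayed formula, and verify that it is a genuine positive $(\bP,\bG)$-martingale, so that $d\bQ=L_T^\bQ\,d\bP$ defines a probability measure equivalent to $\bP$; here one uses that $\theta$ is bounded (by \textbf{(H3)}--\textbf{(H4)}) and $\lambda^\bG$ is bounded (by \textbf{(H2)}), together with $\gamma>-1$, to upgrade $L^\bQ$ from a local to a true martingale by a standard localization/uniform-integrability argument. By construction $dL_t^\bQ=L_{t^-}^\bQ(-\theta_t\,dB_t+\gamma_t\,dM_t)$, so the computation of the first paragraph shows that $\tilde S$ is a $(\bQ,\bG)$-local martingale, i.e.\ $\bQ\in\M(\bP)$. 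The only genuinely delicate point is this verification of the true martingale property of $L^\bQ$ across all admissible $\gamma$; the rest is the exponential-SDE bookkeeping above. (If the statement is only used to describe the densities of measures already known to be e.m.m.'s, the reverse step may be omitted and the proof reduces to the first two paragraphs.)
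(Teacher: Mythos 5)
Your argument is correct and follows exactly the route the paper takes: the paper offers no separate proof of this lemma, the intended justification being precisely the preceding discussion (Kusuoka's representation theorem, Girsanov, and the drift condition $a_t=-\theta_t$ forced by the local-martingale property of $\tilde S$) combined with the Dol\'eans-Dade exponential formula for a driver with a single unit jump, which is the computation you carry out. The only point to be careful about is your reverse inclusion: boundedness of $\theta$ and $\lambda^\bG$ does \emph{not} by itself make $L^\bQ$ a true martingale for an arbitrary predictable $\gamma>-1$ (some integrability on $\gamma$ is needed, which is why the paper later restricts the dual controls to the bounded set $\Gamma$ and only there proves true martingality); however, as your final parenthetical correctly observes, the lemma quantifies over probability measures $\bQ$ already equivalent to $\bP$, so true martingality of the density is part of the hypothesis and this step can be dropped.
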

To alleviate the notations, for any $\bQ \in \mathcal M(\bP)$, we write
%In what follows, if $\bQ$ is an e.m.m., we set
$L^{\gamma}$ for $L^\bQ$ where $\gamma$ is the process associated to $\bQ$, i.e.,  $$dL_t^\gamma =   {L_{t^-}^\gamma} ( -\theta_t dB_t + \gamma_t d M_t) \;, \quad L^\gamma_0=1.$$

For any $\bQ \in \M(\bP)$, we remark that $RX^{x, \pi, c} + \int_0^. R_sc_s  ds$ is a positive $(\bQ,\bG)\text{-local martingale}$, hence a supermartingale, so we have
\begin{eqnarray*}%\label{SuperMartingale:WealthProcess}
\bE^\bQ \Big(R_TX_T^{x,
\pi,c} + \int_0^T R_sc_s ds\Big) \leq x\;, \quad \forall \; (\pi,c) \in \A(x)\;,
\end{eqnarray*}
where $\bE^{\bQ}$ denotes the expectation w.r.t. the probability measure $\bQ$ or equivalently
\begin{eqnarray}\label{SuperMartingale:WealthProcess}
\bE \Big(R_TL_T^\gamma X_T^{x,
\pi,c} + \int_0^T R_sL_s^\gamma c_s ds\Big) \leq x\;, \quad \forall \; (\pi,c) \in \A(x)\;.
\end{eqnarray}

\subsection{Dual optimization problem}
We now define the dual problem associated to \eqref{pb} according to the standard theory of convex duality. For that, we consider the set $\Gamma$ of dual admissible processes.
\begin{Definition}
The set $\Gamma$ of dual admissible processes is the set of $\bG$-predictable processes $\gamma$
  such that  there exists two constants $A$ and $C$ satisfying $-1 < A\leq\gamma_t\leq  C$ %
for any $t \in [0,\tau]$ and $\gamma_t = 0$ for any $t\in (\tau, T]$.
\end{Definition}

It is interesting to work with this admissible set $\Gamma$ throughout the sequel since, for any $\gamma \in \Gamma$, the process $L^{\gamma}$ is a  positive $\bG$-martingale (indeed, due to the bounds on $\gamma$, the process $L^\gamma$ is a true martingale), and it satisfies the following integrability property which simplifies some proofs in the sequel. Moreover, we consider that $\gamma$ is null after the time $\tau$ since the value of $\gamma$ after $\tau$  does not interfere in the calculus, thus it is possible to fix any value for $\gamma$ after $\tau$. 

\begin{Lemma}\label{uniforme integrable}
For any $\gamma \in \Gamma$, the process $L^\gamma$ satisfies
$$\bE \Big[ \sup_{t \in [0,T]} (L^\gamma_t)^q \Big] < \infty\;.$$
\end{Lemma}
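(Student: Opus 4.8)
The plan is to write $L^\gamma_t = \mathcal{E}_t \cdot D_t$, where $\mathcal{E}_t := \exp(-\int_0^t \theta_s dB_s - \frac12 \int_0^t |\theta_s|^2 ds)$ is the continuous Doléans-Dade exponential coming from the Brownian part, and $D_t := \exp(\int_0^t \ln(1+\gamma_s) dH_s - \int_0^t \gamma_s \lambda_s^\bG ds)$ is the pure-jump factor. Since $\theta = (\nu - r)/\sigma$ is uniformly bounded by {\bf (H3)}--{\bf (H4)}, standard estimates on exponential martingales give $\bE[\sup_{t\in[0,T]} \mathcal{E}_t^{qk}] < \infty$ for every real exponent, in particular for $qk$ with $q<0$; this uses that $\mathcal{E}^{a}$ is, up to a bounded multiplicative drift correction, itself a martingale of the same type, so Doob's $L^m$-inequality applies after raising to a suitable power. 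The jump factor $D$ is easier: because $H$ jumps at most once (at $\tau$) and $\gamma \in \Gamma$ is bounded with $\gamma \geq A > -1$, we have $0 < D_t \leq (1+C)\exp(\|\lambda^\bG\|_\infty \cdot |A| T)$ deterministically, hence $D$ and all its powers are uniformly bounded.

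The key steps, in order: first, record the factorization $L^\gamma = \mathcal{E} D$ and the pathwise bound $\sup_t D_t^q \leq K$ for a constant $K = K(A,C,T,\|\lambda^\bG\|_\infty)$ — note that since $q<0$, controlling $D^q$ requires an \emph{upper} bound on $D$ from below, i.e. $D_t \geq \exp(-\|\lambda^\bG\|_\infty C T) > 0$, which holds since $\ln(1+\gamma_s) dH_s \geq \ln(1+A)\, dH_s$ and $-\gamma_s \lambda_s^\bG \geq -C\lambda_s^\bG$. Second, bound $\sup_t (L^\gamma_t)^q \leq K \sup_t \mathcal{E}_t^q$. Third, prove $\bE[\sup_t \mathcal{E}_t^q] < \infty$: pick an integer $m \geq 2$ with $mq$ still negative (any $m$ works since $q<0$ means $mq<0$); observe $\mathcal{E}_t^{q} = \mathcal{E}^{(q)}_t \exp(\tfrac12 (q^2 - q)\int_0^t |\theta_s|^2 ds)$ where $\mathcal{E}^{(q)}$ is the true martingale $\exp(-q\int\theta dB - \tfrac12 q^2 \int |\theta|^2)$, so that $\mathcal{E}_t^q \leq e^{c T} \mathcal{E}^{(q)}_t$ with $c := \tfrac12|q^2-q|\|\theta\|_\infty^2$; then apply Doob's maximal inequality to the martingale $\mathcal{E}^{(q)}$ (it is in every $L^m$ by the same exponential-bound argument, using boundedness of $q\theta$) to conclude $\bE[\sup_t (\mathcal{E}^{(q)}_t)^m] < \infty$, and in particular $\bE[\sup_t \mathcal{E}^{(q)}_t] \leq \bE[\sup_t (\mathcal{E}^{(q)}_t)^m]^{1/m}(1 + \ldots) < \infty$ — more simply, $\mathcal{E}^{(q)}$ being a positive martingale, $\sup_t \mathcal{E}^{(q)}_t$ is integrable by Doob's $L^1$ maximal inequality precisely because the $L\log L$ / $L^m$ moment is finite. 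Combining the three steps yields the claim.

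The main obstacle is the third step: a positive martingale need not have an integrable supremum in general, so one genuinely needs the uniform boundedness of $\theta$ to upgrade $\mathcal{E}^{(q)}$ (or a power of it) into an $L^m$-bounded martingale before invoking Doob. Concretely, the cleanest route is to fix $m>1$, note that $(\mathcal{E}^{(q)})^m = \exp(-mq\int\theta dB - \tfrac12 m^2 q^2\int|\theta|^2)\cdot \exp(\tfrac12(m^2 - m)q^2 \int |\theta|^2)$, the first factor a martingale and the second bounded by $e^{c' T}$ with $c' = \tfrac12(m^2-m)q^2\|\theta\|_\infty^2$; hence $\sup_t \mathcal{E}^{(q)}_t \in L^m \subset L^1$ by Doob, which is exactly what is needed. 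Everything else is bookkeeping: the decomposition of $\gamma$ via the Jeulin-type lemma is not even required here since $\Gamma$ already encodes the two-sided bounds and the vanishing after $\tau$.
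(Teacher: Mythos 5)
Your proof is correct and follows essentially the same route as the paper's: both absorb the jump contribution and the bounded drift corrections into deterministic constants (using the two-sided bounds on $\gamma$, the boundedness of $\lambda^\bG$ and $\theta$, and the single jump of $H$) and then reduce to $\bE[\sup_{t}\mathcal{E}(-q\int_0^\cdot\theta_s dB_s)_t]<\infty$, which the paper gets from Burkholder--Davis--Gundy and you get from Doob's $L^m$ inequality. The only (harmless) slip is that your stated lower bound $D_t\ge \exp(-\|\lambda^\bG\|_\infty CT)$ should carry the extra factor $(1+A)$ when $A<0$, as your own inequality $\ln(1+\gamma_s)\,dH_s\ge\ln(1+A)\,dH_s$ indicates.
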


\begin{proof}
From It\^o's formula, we get
$$d(L^\gamma_t)^q = (L^\gamma_{t^-})^q \Big[ \Big( \frac{1}{2}q(q-1) | \theta_t|^2 - q \lambda_t^\bG \gamma_t + \lambda_t^\bG \big( (1+ \gamma_t)^q -1 \big) \Big) dt - q \theta_t dB_t +  \big( (1+ \gamma_t)^q -1 \big)dM_t \Big]\;.$$
This can be written under the following form  \footnote{$\mathcal{E} (Y)$ denotes the Dol\'{e}ans-Dade stochastic exponential process associated to a generic martingale $Y$.}
$$(L^\gamma_t)^q = K_t \mathcal{E}\Big(-\int_0^. q \theta_s dB_s + \int_0^. \big( (1+ \gamma_s)^q -1 \big)dM_s \Big)_t \;,$$
where $K$ is the bounded process defined by
$$K_t := \exp \Big( \int_0^t \Big( \frac{1}{2}q(q-1) | \theta_s|^2 - q \lambda_s^\bG \gamma_s + \lambda_s^\bG \big( (1+ \gamma_s)^q -1 \big) \Big) ds \Big)\;.$$
Therefore,   there exists a positive constant $C$ such that
$$\bE \Big[ \sup_{t \in [0,T]} (L^\gamma_t)^q \Big]\leq  C\bE\Big[ \sup_{t \in [0,T]}  \mathcal{E}\Big(-\int_0^. q \theta_s dB_s \Big)_t \Big]\;.$$
We conclude by using the Burkholder-Davis-Gundy inequality.
\end{proof}

 From the conjugate relation (\ref{relation:conjugate:UtoUtilde}),
%we know that
%\begin{eqnarray*}
%    U(x) = \inf_{y>0} (\wt U (y)  + xy), \; x > 0 \;,
%\end{eqnarray*}
%therefore
we get for any $\eta>0$, $\gamma\in\Gamma$ and $(\pi, c)\in\mathcal{A}(x)$
\begin{eqnarray*}
\begin{split}
	\bE \Big[ \int_0^T U(c_s) ds + U(X^{x, \pi, c}_T) \Big] &\leq \bE \Big[ \int_0^T \wt U(\eta R_s L_s^{\gamma}) ds + \wt U(\eta R_T L_T^{\gamma}) \Big]\\
	&\quad+ \eta\bE\Big[\int_0^T  R_s L_s^{\gamma}c_sds + R_T L_T^{\gamma}X^{x, \pi, c}_T\Big]\;.
\end{split}
\end{eqnarray*}
%Since $\bE\Big[\int_0^T  R_s L_s^{\gamma}c_sds + R_T L_T^{\gamma}X^{x, \pi, c}_T\Big]\leq x$
Using (\ref{SuperMartingale:WealthProcess}), the previous inequality gives for any $\eta>0$, $\gamma\in\Gamma$ and $(\pi, c)\in\mathcal{A}(x)$
\begin{eqnarray*}
\begin{split}
	\bE \Big[ \int_0^T U(c_s) ds + U(X^{x, \pi, c}_T) \Big] &\leq \bE \Big[ \int_0^T \wt U(\eta R_s L_s^{\gamma}) ds + \wt U(\eta R_T L_T^{\gamma}) \Big] + \eta x\;.
\end{split}
\end{eqnarray*}
Therefore, the following inequality holds for any $(\pi, c)\in\mathcal{A}(x)$
\begin{eqnarray*}
	\bE \Big[ \int_0^T U(c_s) ds + U(X^{x, \pi, c}_T) \Big] &\leq& \inf_{\eta>0, \gamma \in \Gamma} \Big(\bE \Big[ \int_0^T \wt U(\eta R_s L_s^{\gamma}) ds + \wt U(\eta R_T L_T^{\gamma}) \Big] + \eta x\Big)\;.  %\\
\end{eqnarray*}	
We thus obtain
\begin{eqnarray}\label{eq:primal:dual:relation:1}
	\sup_{(\pi,c) \in \A(x)}\bE \Big[ \int_0^T U(c_s) ds + U(X^{x, \pi, c}_T) \Big] \leq \inf_{\eta>0, \gamma \in \Gamma} \Big(\bE \Big[ \int_0^T \wt U(\eta R_s L_s^{\gamma}) ds + \wt U(\eta R_T L_T^{\gamma}) \Big] + \eta x\Big)\;.  %\\
\end{eqnarray}
We introduce the dual problem for any $\eta>0$
\begin{eqnarray*}%\label{pbdual}
\tilde{V}(\eta) &=& \inf_{\gamma \in \Gamma} \bE \Big[ \int_0^T \tilde U(\eta R_s L_s^\gamma) ds + \tilde{U}(\eta
R_T L_T^\gamma)  \Big] \\
&=& -\frac{\eta^q}{q}\inf_{\gamma \in \Gamma} \bE \Big[\int_0^T (R_s L^\gamma_s)^q ds+  (
R_T L_T^\gamma)^q  \Big]\;.
\end{eqnarray*}
We thus consider the following optimization problem
\begin{equation*} \label{dual}   \inf_{\gamma \in \Gamma} \bE \Big[
 \int_0^T (R_s L^\gamma_s)^q ds +( R_T L_T^\gamma)^q \Big] \;.
\end{equation*}

To solve this problem we use a similar approach to the one used in  Cheridito and Hu \cite{Cheridito} which is linked to the dynamic programming principle. More precisely, we look for a family of processes $\{(J^{(d)}_t(\gamma))_{t \in [0,T]} ~: ~\gamma \in \Gamma\}$, called the conditional gains, satisfying the following conditions
\begin{enumerate}[(i)]
\item $J^{(d)}_T(\gamma)= (R_T L_T^\gamma)^q + \int_0^T (R_s L^\gamma_s)^q ds$, for any $\gamma\in \Gamma$.

\item $J^{(d)}_0(\gamma_1)=J^{(d)}_0(\gamma_2)$, for any $\gamma_1,\gamma_2\in\Gamma$. 

\item $J^{(d)}(\gamma)$ is a $\bG$-submartingale for any $\gamma\in \Gamma$.

\item  There exists some $\gamma^*\in \Gamma$ such that $J^{(d)}(\gamma^*)$ is a $\mathbb{G}$-martingale.
\end{enumerate}

\noindent Under these conditions, we have 
$$J^{(d)}_0(\gamma^*)   =   \inf_{\gamma \in \Gamma} \bE \Big[ \int_0^T (R_s L^\gamma_s)^q ds+  (
R_T L_T^\gamma)^q \Big]\;.$$
Indeed, using (i) and (iii), we have
\begin{equation}\label{inegHIM3}
J^{(d)}_0(\gamma)  \leq \bE\big[ J^{(d)}_T(\gamma) \big]~=~\bE\Big[  \int_0^T (R_s L^\gamma_s)^q ds + (R_T L_T^\gamma)^q   \Big]\;,
\end{equation}
for any $\gamma \in \Gamma$. Then, using (i) and (iv), we have 
\begin{equation}\label{inegHIM2}
J^{(d)}_0(\gamma^*)  =  \bE\big[ J^{(d)}_T(\gamma^*) \big]~=~\bE\Big[  \int_0^T (R_s L^{\gamma^*}_s)^q ds + (R_T L_T^{\gamma^*})^q  \Big]\;.
\end{equation}
Therefore, from (ii), \reff{inegHIM3} and \reff{inegHIM2}, we get for any $\gamma \in \Gamma$
$$
\bE\Big[ \int_0^T (R_s L^{\gamma^*}_s)^q ds + (R_T L_T^{\gamma^*})^q   \Big] =J^{(d)}_0(\gamma^*)  =  J^{(d)}_0(\gamma) \leq \bE\Big[  \int_0^T (R_s L^\gamma_s)^q ds + (R_T L_T^\gamma)^q   \Big]\;.
$$
We can see that
$$J^{(d)}_0(\gamma^*) = \inf_{\gamma \in \Gamma} \bE \Big[\int_0^T (R_s L^\gamma_s)^q ds + 
( R_T L_T^\gamma)^q \Big]\;.$$

We now construct a family of processes $\{(J^{(d)}_t(\gamma))_{t \in [0,T]} ~: ~\gamma \in \Gamma\}$ satisfying the previous conditions using BSDEs. For that we look for $J^{(d)}(\gamma)$ under the following form, which is based on the dynamic programming principle, 
\begin{eqnarray}\label{eq:J:dual}
J^{(d)}_t(\gamma) =  \int_0^t (R_s L^\gamma_s)^q ds + (R_t L^\gamma_t)^q \Phi_t\;, \quad t \in [0,T]\;,
\end{eqnarray}
where $(\Phi, \widehat \varphi, \tilde \varphi)$ is solution in $\mathcal{S}^{\infty}_{\bG}(0, T) \times
\mathcal{H}^2_{\bG}(0, T) \times \mathcal{H}^2_{\bG}(M)  $ to 
\begin{equation}  \label{BSDEPhi}
\Phi_t= 1 - \int_t^T f (s,\Phi_s, \widehat \varphi_s, \tilde
\varphi_s) ds - \int_t^T \widehat \varphi_s dB_s - \int_t^T \tilde
\varphi_s dH_s \;,
\end{equation}
where $f$ is to be determined such that (iii) and (iv) above hold.  In order to determine $f$, we write $J^{(d)}(\gamma)$ as the sum of a martingale and a non-decreasing process that is null for some $\gamma^* \in \Gamma$.\\

\noindent Applying integration by parts formula leads us to
\begin{equation} \label{expldual}
\begin{split}
d [( R_t L_t^{\gamma})^q ] & =   (R_t
L_{t^-}^{\gamma})^q  \Big[ \Big( \frac{1}{2}q (q-1)
|\theta_t|^2 + \lambda_t \big((1+ \gamma_t)^q -1 \big) -q (\lambda_t^\bG\gamma_t + r_t)  \Big) dt  \\
& \quad  - q \theta_t d B_t + \big((1+ \gamma_t)^q -1 \big) d M_t \Big]\;.
\end{split}
\end{equation}
Taking into account (\ref{expldual}) and applying integration by parts formula for the product of processes $ ( R
L^{\gamma})^q$ and $\Phi$, we get
 $$ d J^{(d)}_t(\gamma)
=
 (R_t L_{t^-} ^{\gamma})^qA_t^\gamma dt+ (R_t L_{t^-} ^{\gamma})^q \Big(  (\widehat \varphi_t-  q \theta_t \Phi_{t}   ) d B_t
+   \big[(\Phi_{t^-}  + \tilde{\varphi}_t) (1+ \gamma_t)^q  -
\Phi_{t^-}\big]  d M_t\Big)\;,
 $$
where the predictable finite variation part of $J^{(d)}(\gamma)$  is given by $\int_0^\cdot   (R_s L_{s^-} ^{\gamma})^qA_s^\gamma ds$, where
\begin{equation*}
\begin{split}
A_t^{\gamma} &  := %(R_t L_{t^-} ^{\gamma})^q \Big[  1+f_t - q r_t
\lambda^\bG_ta_t(\gamma_t) +  1+ f (t,\Phi_t, \widehat \varphi_t, \tilde \varphi_t)  - q r_t
\Phi_{t^-}+\frac{1}{2}q (q-1)  |\theta_t|^2 \Phi_{t^-} -
\lambda_t^\bG \Phi_{t^-}
    - q \theta_t \widehat{\varphi}_t   \;,
\end{split}
\end{equation*}
with 
\begin{equation}\label{function:drift:A}
a_t(x) :=  (\Phi_{t^-} +
\tilde{\varphi}_t) (1+ x)^q - q\Phi_{t^-} x \;,  \quad t \in [0,T]\;.
\end{equation}
In order to obtain a non-negative process $A^\gamma$ for any $\gamma \in \Gamma$ (to satisfy the condition (iii)) and that is null for some $\gamma^*\in \Gamma$ (to satisfy the condition (iv)), it is obvious that the family $\{(A_t^{\gamma})_{t \in [0,T]} ~: ~\gamma \in \Gamma\}$ has to satisfy  $\min_{\gamma \in\Gamma} A_t^{\gamma} = 0$. Assuming that there exists a positive constant $C$ such that $\Phi_{t}\geq C$ and $\Phi_{t^-} + \tilde{\varphi}_t\geq C$ for any $t\in[0, \tau)$,
%the processes $(\Phi_{t^-})_{t\geq 0}$ and  $(\Phi_{t^-} + \tilde{\varphi}_t)_{t\geq 0}$ take positive values on the set $(t < \tau)$,
we remark that the minimum is attained for $\gamma^*$ defined by
$$\gamma^*_t := \Big(\frac{\Phi_{t^-}}{\Phi_{t^-} + \tilde{\varphi}_t}\Big)^{\frac{1}{q-1}} - 1\; $$ so that
$$\underline a_t := \min_{x>-1} a_t(x) = (1-q)\Phi_{t^-}^{p} (\Phi_{t^-} + \tilde \varphi_t)^{1-p} + q \Phi_{t^-}\;.$$
This leads to the following choice for the generator $f$
  \begin{eqnarray}\label{genedual}
 f (t,y, z, u) & = & \Big(q r_t - \frac{1}{2}q (q-1)  |\theta_t|^2  + (1-q)\lambda_t^\bG\Big) y+ q \theta_t z  \nonumber\\
& &\quad  - (1-q)\lambda_t^\bG   (y + u)^{1-p} \; y^p - 1 \;.% \nhn{-\lambda_t^\bG qx}.
\end{eqnarray}
\subsection{Solution of the BSDE (\ref{BSDEPhi})}
We remark that the obtained generator (\ref{genedual}) is non standard since it involves in particular the term  $(y+u)^{1-p}y^p$.
We shall prove the following result
\begin{Theorem}\label{theoreme existence}
The BSDE
\begin{equation} \label{BSDEPhi0}
\begin{split}
\Phi_t & = 1 - \int_t^T \Big( \big(q r_s - \frac{1}{2}q (q-1)  |\theta_s|^2  +
(1-q) \lambda_s^\bG \big) \Phi_{s} + q \theta_s \widehat{\varphi}_s   \\
& \quad  - (1-q) \lambda_s^\bG  (\Phi_{s} +
\tilde{\varphi}_s)^{1-p} \;  \Phi_{s}^p -1 \Big) ds - \int_t^T
\widehat{\varphi}_s d B_s - \int_t^T \tilde{\varphi}_s d H_s\;,
\end{split}
\end{equation}
admits a solution $(\Phi,  \widehat{\varphi} ,  \tilde{\varphi}) $
belonging to $\mathcal{S}^{\infty,+}_{\bG}(0, T) \times
\mathcal{H}^2_{\bG}(0,T)   \times
 \mathcal{H}^2_{\bG}(M)$,
 such that $\Phi_{t^-} + \tilde{\varphi}_t \geq 1$.
\end{Theorem}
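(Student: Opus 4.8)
The plan is to decouple the jump at $\tau$ by splitting the equation into a "before $\tau$" and an "after $\tau$" part, using the Jeulin-type decomposition from the Lemma in Section~\ref{set up}, and then to solve the resulting equations by a combination of a logarithmic (power) change of variable and standard quadratic/Lipschitz BSDE theory. The key structural observation is that after $\tau$ the intensity $\lambda^\bG$ vanishes, so $\tilde\varphi$ disappears and the generator reduces to the \emph{linear} generator
$f^{1}(t,y,z)=(qr^1_t-\tfrac12 q(q-1)|\theta^1_t|^2)y+q\theta^1_t z-1$, whose associated BSDE has, for each frozen value $u=\tau$, an explicit positive bounded solution $\Phi^1(u)$ (it is a linear BSDE with bounded coefficients and terminal value $1$; positivity and a lower bound follow from the explicit Feynman--Kac/Dol\'eans-Dade formula, and boundedness from the boundedness of $r^1$, $\theta^1$). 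Measurability in the parameter $u$ is inherited from (H3) and (H4) exactly as in the cited Jeulin lemma. This gives the component $\tilde\varphi_t$ via the jump relation $\tilde\varphi_t=\Phi^1_{t}(\tau\wedge t\text{-frozen})-\Phi^0_{t^-}$ once the before-$\tau$ part is known; note that the required inequality $\Phi_{t^-}+\tilde\varphi_t\ge 1$ is then \emph{equivalent} to $\Phi^1_t(t)\ge 1$, which holds because the linear BSDE with generator bounded below by the driver $-1$ and terminal value $1$ has solution $\ge 1$ (compare with the constant solution $1$ of $y'=\dots$, or simply by the comparison theorem against the generator's lower bound).

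Next I would treat the before-$\tau$ equation. Substituting $\tilde\varphi$ as above, the generator on $[0,\tau)$ becomes an $\ff$-BSDE of the form
\begin{equation*}
\Phi^0_t = 1 - \int_t^{T} \Big( \big(qr^0_s - \tfrac12 q(q-1)|\theta^0_s|^2 + (1-q)\lambda^\ff_s\big)\Phi^0_s + q\theta^0_s\widehat\varphi^0_s - (1-q)\lambda^\ff_s \big(\Phi^1_s(s)\big)^{1-p}(\Phi^0_s)^p - 1\Big)\,ds - \int_t^{T}\widehat\varphi^0_s\,dB_s\,,
\end{equation*}
with a (continuous, known, bounded, bounded below by a positive constant) coefficient $\Phi^1_s(s)$. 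The term $q\theta^0_s\widehat\varphi^0_s$ is linear in $z$ (since $\theta^0$ is bounded), so this is \emph{not} quadratic; the only nonstandard feature is the concave power term $(\Phi^0)^p$ with $p\in(0,1)$. I would perform the change of variable $Y_t=(\Phi^0_t)^{1-p}$ (equivalently work with $\Phi^0=Y^{1/(1-p)}$), which is the standard device turning the $y^p$-nonlinearity into a Lipschitz one on a region bounded away from $0$: if one can show a priori $\Phi^0\ge c>0$ and $\Phi^0\le C$, then on that region the generator is Lipschitz in $(y,z)$ and one-sided Lipschitz enough for a contraction/Picard argument, giving existence and uniqueness in $\mathcal S^{\infty}_{\ff}\times\mathcal H^2_{\ff}$. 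The a priori bounds themselves come from comparison: an upper bound by comparing with the linear BSDE obtained by dropping the (nonpositive, since $q<0$ makes $-(1-q)<0$... careful: $1-q>0$, $\lambda\ge0$, so $-(1-q)\lambda(\cdot)^{1-p}(\cdot)^p\le 0$) negative term, and a strictly positive lower bound by comparing with a linear BSDE where $(\Phi^0_s)^p$ is replaced using the upper bound $C^{p}$, keeping the coefficient of $\Phi^0$ linear. Finally one glues: the $\bG$-process $\Phi_t:=\Phi^0_t\1_{t\le\tau}+\Phi^1_t(\tau)\1_{t>\tau}$ together with $(\widehat\varphi,\tilde\varphi)$ built from $\widehat\varphi^0$, $\widehat\varphi^1$ and the jump relation solves \eqref{BSDEPhi0}; membership in $\mathcal H^2_{\bG}(M)$ is immediate because $\tilde\varphi$ is bounded and $\lambda^\bG$ is bounded.

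The main obstacle I anticipate is establishing the two-sided a priori bounds $0<c\le\Phi^0\le C$ rigorously \emph{before} one knows the solution exists, since the nonlinearity $y\mapsto y^p$ is only locally Lipschitz and blows up in slope near $0$. The clean way around this is a truncation argument: replace $y^p$ by a globally Lipschitz function $g_n(y)$ agreeing with $y^p$ on $[1/n,n]$, solve the truncated (now Lipschitz) BSDE by standard theory, derive via the comparison theorem uniform-in-$n$ bounds $c\le \Phi^{0,n}\le C$ that do not depend on the truncation (using only the sign of the power term and the boundedness of $r^0,\theta^0,\lambda^\ff$), and then note that for $n$ large enough the truncation is inactive, so $\Phi^{0,n}$ solves the original equation. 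Uniqueness then follows on the region $\{c\le y\le C\}$ where the generator is genuinely Lipschitz. A secondary technical point worth stating carefully is the joint measurability of $u\mapsto(\Phi^1(u),\widehat\varphi^1(u))$ and the verification that the pasted processes are genuinely $\bG$-predictable, which is exactly the content of the Jeulin-type decomposition lemma and should be invoked rather than reproved.
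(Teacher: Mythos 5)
Your proposal follows essentially the same route as the paper: decompose the equation at $\tau$ into an after-$\tau$ linear Brownian BSDE solved explicitly (with $\Phi^1\geq 1$ obtained exactly as you describe, since $q<0$ makes the exponent in the Feynman--Kac representation nonnegative), a before-$\tau$ Brownian BSDE whose $y^p$ nonlinearity is handled by truncation plus comparison, and a gluing step which is Theorem 3.1 of \cite{KL}; the identity $\Phi_{t^-}+\tilde\varphi_t=\Phi^1_t(t\wedge\tau)$ then gives the last assertion. The paper uses a single, explicitly chosen truncation level $m=\exp\big((q-1)\Lambda\big)$ and invokes the Fan--Jiang result \cite{FanJiang} for the (monotone, linear-growth) truncated generator rather than a sequence $g_n$ of Lipschitz truncations, but that is a cosmetic difference; your change of variable $Y=(\Phi^0)^{1-p}$ is likewise unnecessary once the two-sided bounds are in hand, as you yourself observe.

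The one step you should repair is the comparison argument for the a priori bounds: both directions are stated backwards, and the lower bound is precisely the crux that deactivates the truncation. Writing \eqref{BSDEbefore} in the standard form $-dY_t=f(t,Y_t,z_t)\,dt-z_t\,dB_t$, the power term enters the generator with the sign $+(1-q)\lambda^\ff_t\,(\Phi^1_t(t))^{1-p}\,y^p\geq 0$. \emph{Dropping} it therefore decreases the generator and, by the comparison theorem, yields a \emph{lower} bound on the solution --- this is exactly the paper's Step 2, where the resulting linear BSDE has the explicit solution $Z_t=\bE^{*}[\exp(\int_t^T(\frac12 q(q-1)|\theta^0_s|^2-qr^0_s-(1-q)\lambda^\ff_s)\,ds)\,|\,\F_t]\geq \exp((q-1)\Lambda T)>0$ --- not an upper bound. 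Conversely, your proposed lower-bound comparison (replacing $(\Phi^0_s)^p$ by $C^p$) \emph{increases} the generator, hence produces an upper bound, and it presupposes the bound $\Phi^0\leq C$; as written it is both reversed and circular. The upper bound is in fact the easy part: the truncated power term has linear growth in $y$, so boundedness follows from standard estimates for linear-growth generators (or from the Feynman--Kac formula for the dominating linear BSDE). With the comparisons put in the correct order, your truncation scheme closes exactly as the paper's does.
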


%\subsection{Solution of the BSDE \reff{BSDEPhi0}}
%We now prove Theorem \ref{theoreme existence} via the decomposition procedure introduced in \cite{KL}.
We use the decomposition procedure introduced in \cite{KL} to prove Theorem \ref{theoreme existence}. For that, we transform
the BSDE \eqref{BSDEPhi0} into a recursive system of  Brownian
BSDEs. %. For that we introduce the two following Brownian BSDEs
In a first step, for each $u \in [0,T]$, we prove that the following BSDE has a solution on the time interval $[u,T]$ 
\begin{equation} \label{BSDEafter}
\left\{\begin{array}{rcl}
d \Phi^1_t (u)  &= & \Big[ \big( q r^1_t(u) - \frac{1}{2}q (q-1)  |\theta^1_t(u)|^2 \big) \Phi^1_{t}(u) + q \theta^1_t(u) \widehat{\varphi}^1_t(u) -1 \Big]dt  + \widehat{\varphi}^1_t(u) d B_t\;,\\
&&\\
\Phi^1_T(u) & =&1\;,
\end{array}\right.
\end{equation}
and that the initial value   $\Phi^1_u(u)$  of this BSDE is   $\mathcal{F}_u$-measurable. Then, in a second step, we prove that the following BSDE has a solution on the time interval $[0,T]$ 
\begin{equation} \label{BSDEbefore}
\left\{\begin{array}{rcl}
 d \Phi^0_t
%& =  \Big[ \big( q r^0_t - \frac{1}{2} q (q-1) \theta_t^2  + \lambda_t \big) \Phi^0_{t}+ q \theta_t \widehat{\varphi}^0_t  \\
%& \quad  - (1-q) \lambda_t \left( \Phi^0_t + (\Phi^1_t (t) - \Phi^0_t)
%\right)^{1-p}  (\Phi^0_t \vee 0)^p  \Big]dt + \widehat{\varphi}^0_t d B_t  \\
& = &\Big[ \big( q r^0_t - \frac{1}{2} q (q-1) |\theta^0_t|^2 +(1-q)\lambda_t^\ff \big) \Phi^0_{t}+ q \theta^0_t \widehat{\varphi}^0_t  \\
 && \quad - (1-q)  \lambda_t^\ff( \Phi^1_t (t)  )^{1-p}  (\Phi^0_t )^p  \big) -1 \Big]dt + \widehat{\varphi}^0_t d B_t\;, \\
 \Phi^0_T &= &1\;,
\end{array}\right.
\end{equation}
where $\Phi^1$ is part of the solution of the BSDE
\eqref{BSDEafter}. 

\begin{Proposition}\label{existence BSDEafter}
For any $u \in [0,T]$, the BSDE (\ref{BSDEafter}) admits a unique solution $(\Phi^1(u), \widehat
\varphi^1(u))\in \mathcal{S}^{\infty}_{\ff}(u, T) \times
\mathcal{H}^2_{\ff}(u,T)$. Furthermore,   $1
\leq \Phi^1_t(u) \leq C$ for any $t \in [u,T]$
where $C$ is a constant which does not depend on $u$.
\end{Proposition}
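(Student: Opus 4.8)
The plan is to recognize (\ref{BSDEafter}) as a \emph{linear} Brownian BSDE on the deterministic interval $[u,T]$ with bounded data, solve it by the classical linear‑BSDE theory, and then read off the two‑sided bound $1\leq \Phi^1_t(u)\leq C$ (with $C$ independent of $u$) from the explicit representation of the solution.

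First I would check that the coefficients are bounded uniformly in $u$. Writing the generator of (\ref{BSDEafter}) as $g^u_s(y,z):=-\alpha^u_s\,y-\beta^u_s\,z+1$ with
$$\alpha^u_s:=q\,r^1_s(u)-\tfrac{1}{2}q(q-1)|\theta^1_s(u)|^2\;,\qquad \beta^u_s:=q\,\theta^1_s(u)\;,$$
Assumption (H3) gives that $r^1(u)$ is bounded, and since $\theta_t=(\nu_t-r_t)/\sigma_t$ inherits from (H3)--(H4) the decomposition with $\ff$‑predictable component $\theta^1_t(u)=(\nu_t-r^1_t(u))/\sigma_t$, which is bounded, all these bounds being independent of $u$. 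Hence $g^u$ is Lipschitz in $(y,z)$ with a constant not depending on $u$, $g^u(\cdot,0,0)\equiv 1$ is bounded, and the terminal value $1$ is bounded; the standard existence and uniqueness theorem for Lipschitz BSDEs then yields a unique solution in $\Ss^2_{\ff}(u,T)\times\Hh^2_{\ff}(u,T)$, in particular $\widehat\varphi^1(u)\in\Hh^2_{\ff}(u,T)$. The $\F_u$‑measurability of the initial value $\Phi^1_u(u)$ is then automatic from the $\ff$‑adaptedness of the solution.

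Next, the representation formula for linear BSDEs gives
$$\Phi^1_t(u)=\bE\Big[\Gamma^u_{t,T}+\int_t^T\Gamma^u_{t,s}\,ds\;\Big|\;\F_t\Big]\;,\qquad \Gamma^u_{t,s}:=\exp\Big(-\!\int_t^s\alpha^u_r\,dr\Big)\,\mathcal{E}\Big(-\!\int_t^{\cdot}\beta^u_r\,dB_r\Big)_s\;,$$
where $\Gamma^u_{t,\cdot}>0$, and since $\beta^u$ is bounded the Dol\'{e}ans‑Dade exponential $\mathcal{E}(-\!\int_t^{\cdot}\beta^u_r\,dB_r)$ is a true martingale equal to $1$ at time $t$. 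Because $p\in(0,1)$ forces $q<0$ and $q(q-1)>0$, and $r^1(u)\geq 0$, one has $\alpha^u_s\leq 0$, hence $\exp(-\!\int_t^s\alpha^u_r\,dr)\geq 1$; dropping the nonnegative time‑integral and using the martingale property yields $\Phi^1_t(u)\geq\bE\big[\mathcal{E}(-\!\int_t^{\cdot}\beta^u_r\,dB_r)_T\mid\F_t\big]=1$. (Equivalently, the lower bound follows by comparison, since $g^u_s(1,0)=1-\alpha^u_s\geq 1\geq 0$ and the constant process $1$ solves the BSDE with zero generator and terminal value $1$.) For the upper bound, the uniform bounds on $r^1(u)$ and $\theta^1(u)$ give a constant $C_1$, independent of $u$, with $\exp(-\!\int_t^s\alpha^u_r\,dr)\leq C_1$ for all $t\leq s\leq T$; Fubini and the martingale property then give $\Phi^1_t(u)\leq C_1\big(1+(T-t)\big)\leq C_1(1+T)=:C$, independent of $u$. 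This shows in particular $\Phi^1(u)\in\Ss^{\infty}_{\ff}(u,T)$ with the claimed uniform bounds.

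There is no genuine obstacle here: the statement is essentially the classical linear‑BSDE lemma. The points that deserve care are (a) tracking that every estimate is independent of the parameter $u$ (this is what makes $C$ uniform, and it is needed for the gluing with (\ref{BSDEbefore}) in the second step), and (b) the sign bookkeeping $q<0$, $q(q-1)>0$, $r^1(u)\geq 0$ that produces the clean lower bound $\Phi^1_t(u)\geq 1$.
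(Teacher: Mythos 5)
Your proof is correct and follows essentially the same route as the paper: the explicit adjoint-process representation of the linear BSDE, followed by the sign bookkeeping $q<0$, $q(q-1)>0$, $r^1(u)\geq 0$ and the martingale property of the Dol\'eans-Dade exponential (which the paper phrases as a change of measure $\bP^u$ plus Bayes' rule) to obtain $1\leq\Phi^1_t(u)\leq C$ with $C$ independent of $u$. No gaps.
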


\begin{proof} Let us fix $u\in[0, T]$. Since the BSDE (\ref{BSDEafter}) is linear with bounded coefficients,
the solution  $(\Phi^1 (u), \widehat{\varphi}^1(u)) \in
\mathcal{S}^{\infty}_{\ff}(u, T) \times \mathcal{H}^2_{\ff}(u,T)$ is given by
\begin{equation}   \label{eqPhi1}
\Phi^1_t (u) =  \bE \Big[\Gamma_T^t(u) + \int_t^T \Gamma_s^t(u) ds \Big| \F_t   \Big]\;,\quad t \in [u,T]\;,
\end{equation}
where for a fixed $t\in[u, T]$, $(\Gamma_s^t(u))_{t \leq s \leq T}$ stands for the adjoint process defined by 
$$\Gamma_s^t (u)  = \exp \Big( \int_t^s \big(-q r^1_v(u) + \frac{1}{2} q (q-1) |\theta^1_v(u)|^2 \big) dv \Big) \E \Big(-\int_t^{\cdot} q \theta^1_v(u) d B_v \Big)_s. \\
%& = \exp \left( \int_t^s \big(q r^1_v(u) + \frac{q}{2} \theta_v^2 \big) dv - \int_t^{s} q \theta_v d B_v \right).
$$
To prove that $\Phi^1$ is uniformly bounded, we introduce the probability  measure  $\bP^{  u}$,  defined on $\F_t$, for $t\leq T$,
by its Radon-Nikodym density $Z_t(u) := \E
(-\int_0^{\cdot} q \theta^1_v(u) d B_v)_t$, which is a true martingale,  and we denote
by $\bE^{  u}$ the expectation under this probability. Then, by
virtue of the formula \eqref{eqPhi1} and Bayes' rule, we get
\begin{eqnarray*}
\Phi^1_t (u) & = &\bE^{  u} \Big[ \exp \Big( \int_t^T \big(-q r^1_s(u) +
\frac{1}{2} q (q-1) |\theta^1_s(u)|^2 \big) ds \Big) \Big| \F_t
\Big]\\
&& +\; \bE^{  u} \Big[ \int_t^T \exp \Big( \int_t^s \big(-q r^1_v(u) + \frac{1}{2} q (q-1) |\theta^1_v(u)|^2 \big) dv \Big) ds \Big| \F_t  \Big]\;. \\
\end{eqnarray*}
From \textbf{(H3)} and \textbf{(H4)}, and since $q<0$, there exists a positive constant $C$ which is independent of $u$ such that $1 \leq \Phi^1_t(u) \leq C$ for any $t\in[u,T]$.
\end{proof}
%\mb{explain that $\Phi ^1_t(t)$ can be defined for any $t$ as}
\begin{Proposition}\label{existence BSDEbefore}
The BSDE (\ref{BSDEbefore}) admits a unique solution $(\Phi^0, \widehat
\varphi^0)\in \mathcal{S}^{\infty,+}_{\ff}(0, T) \times
\mathcal{H}^2_{\ff}(0, T)$. \end{Proposition}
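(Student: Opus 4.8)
The plan is to treat \eqref{BSDEbefore} as a one-dimensional Brownian BSDE on $[0,T]$ whose generator, once we substitute the already-constructed process $\Phi^1_\cdot(\cdot)$, depends only on $(t,\omega)$ and on the unknowns $(\Phi^0,\widehat\varphi^0)$. Writing
$$
g(t,y,z):=\big(q r^0_t-\tfrac12 q(q-1)|\theta^0_t|^2+(1-q)\lambda^\ff_t\big)y+q\theta^0_t z-(1-q)\lambda^\ff_t\big(\Phi^1_t(t)\big)^{1-p}y^p-1\,,
$$
the first task is to record the structural properties of $g$ on the domain $y\ge 0$: it is continuous in $y$, Lipschitz in $z$ uniformly in $(t,\omega)$ (since $\theta^0$ is bounded by \textbf{(H4)}), and, crucially, the nonlinearity $y\mapsto y^p$ with $p\in(0,1)$ is concave and only $\tfrac12$-Hölder near $0$, so $g$ is \emph{not} Lipschitz in $y$. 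The standard Lipschitz BSDE theory therefore does not apply directly, and this lack of Lipschitz continuity at $y=0$ is the main obstacle. I would circumvent it by an a priori two-sided bound that keeps the solution away from $0$, combined with a truncation/monotonicity argument.

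Concretely, I would proceed as follows. First, exhibit explicit constant sub- and super-solutions. Using $q<0$, $\lambda^\ff\ge 0$, $r^0\ge 0$, and the bounds $1\le \Phi^1_t(t)\le C$ from Proposition \ref{existence BSDEafter}, one checks that a sufficiently small constant $c_-=c_-(T,\text{bounds})>0$ satisfies $g(t,c_-,0)\ge 0$ for all $(t,\omega)$ — here the term $-1$ competes with the linear-in-$y$ terms and the negative term $-(1-q)\lambda^\ff_t(\Phi^1)^{1-p}y^p$, all of which vanish or stay controlled as $c_-\downarrow 0$ except the $-1$, so one actually wants $c_-$ chosen so that the positive contribution $(1-q)\lambda^\ff_t(\Phi^1)^{1-p}c_-^p-1$ is handled; a cleaner route is to note the comparison with the linear BSDE obtained by dropping the good-sign nonlinear term, giving a strictly positive lower bound, and then bootstrap. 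Symmetrically, a large constant $c_+$ gives $g(t,c_+,0)\le 0$. By the comparison theorem for Brownian BSDEs, any solution satisfies $c_-\le \Phi^0_t\le c_+$, i.e. $\Phi^0\in\mathcal S^{\infty,+}_\ff(0,T)$.

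Second, having localized the relevant range of $y$ to $[c_-,c_+]$, replace $g$ by $\bar g(t,y,z):=g(t,\,(y\vee c_-)\wedge c_+,\,z)$, which now \emph{is} Lipschitz in $y$ (the map $y\mapsto y^p$ is Lipschitz on $[c_-,c_+]$) and Lipschitz in $z$, with a bounded driver at $y=0$; existence and uniqueness of $(\Phi^0,\widehat\varphi^0)\in\mathcal S^\infty_\ff(0,T)\times\mathcal H^2_\ff(0,T)$ for the truncated BSDE follow from the classical Pardoux–Peng theorem. The a priori bounds of the previous step show the solution of the truncated equation never exits $[c_-,c_+]$, hence solves the original \eqref{BSDEbefore}, giving existence in $\mathcal S^{\infty,+}_\ff(0,T)\times\mathcal H^2_\ff(0,T)$. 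For uniqueness within this class, I would argue that any solution lies in $[c_-,c_+]$ by the comparison argument above and therefore solves the truncated (Lipschitz) BSDE, for which the solution is unique; alternatively one can subtract two solutions, use that on the common range $y\mapsto y^p$ is Lipschitz, and conclude by the usual Gronwall/Itô estimate. I expect the only delicate point to be making the lower-bound argument fully rigorous — i.e. producing a genuinely positive $c_-$ — which is why I would route it through comparison with an explicitly solvable linear BSDE rather than trying to verify $g(t,c_-,0)\ge0$ by hand.
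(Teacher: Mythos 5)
Your proposal is correct and follows essentially the same route as the paper: truncate the $y^p$ nonlinearity away from $0$ so the generator becomes well behaved, solve the truncated equation by standard one-dimensional BSDE theory, and use comparison with the explicit linear BSDE obtained by dropping the nonnegative nonlinear term to show the solution stays above the truncation level (so it solves the original equation), with uniqueness following because any solution in $\mathcal{S}^{\infty,+}_{\ff}(0,T)$ is bounded below and hence solves a truncated equation whose solution is unique. The only cosmetic differences are that the paper truncates only from below, with level $m$ read off from the explicit linear solution, and invokes the Fan--Jiang existence/uniqueness result rather than a two-sided truncation plus Pardoux--Peng.
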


\begin{proof}
The generator of the BSDE (\ref{BSDEbefore}) is not defined on the whole space $[0, T]\times\Omega\times\R\times\R$ and the generator is not classical. So the proof of this proposition will be performed in several steps.  We first introduce a modified BSDE where the term $y^p$ is replaced by $(y \vee m)^p$ (where $m$ is a positive constant which is defined later) to ensure that the generator is well defined on the whole space $[0, T]\times\Omega\times\R\times\R$. We then prove via a comparison theorem that the solution of the modified BSDE satisfies the initial BSDE. In the last step, we prove the uniqueness of the solution.\\
{\bf Step 1.}  \textit{Introduction of the modified BSDE}.

We consider
\begin{equation} \label{BSDEmodified}
\left\{\begin{array}{rcl}
 -d Y_t
& = &\bar{g} (t, Y_t, \widehat{y}_t)dt - \widehat{y}_t d B_t\;,\\
Y_T&=&1\;,
\end{array}\right.
\end{equation}
where the generator $\bar g$ is given by
$$
\bar{g} (t,y, z) := 1+\Big(  \frac{1}{2} q (q-1)
|\theta^0_t|^2 -q r^0_t - (1-q)\lambda_t ^\ff\Big) y - q \theta^0_t z
   + (1-q)  \lambda_t^\ff ( \Phi^1_t (t)  )^{1-p} (y \vee m)^p\;,
$$
with $m:= \exp\big( (q-1) \Lambda \big)$, and $\Lambda$ is a constant such that $\lambda^\ff_t \leq \Lambda$ for any $t \in [0,T]$.

Since $p\in(0, 1)$, there exists a positive constant $C$ such that  $(y \vee m)^p \leq C(1 + |y|)$. % (recall $ p \in (0,1)$).
We also have $\Phi^1 (.)$ is uniformly bounded, and using  assumptions
${\bf (H2)}$, ${\bf (H3)}$ and ${\bf (H4)}$ we obtain that
$\bar{g}$ has linear growth uniformly w.r.t. $y$.
%Moreover, the mapping $(y, z) \to \bar{g} (t,y, z) $ is
%continuous, $dt \otimes dP$ $a.s.$
It follows from
  Fan and Jiang \cite{FanJiang} that the BSDE
\eqref{BSDEmodified} has a unique solution $(Y, \widehat y) \in
\mathcal{S}^{\infty}_{\ff}(0, T) \times \mathcal{H}^2_{\ff}(0,T)$.\\

For the convenience of the reader, we recall the Fan and Jiang  conditions, which, in our setting, are obviously satisfied.
The solution of the  BSDE  
$$-dY_t= f(t,Y_t,\widehat y_t)dt -\widehat y_t dB_t\;,\quad Y_T=1$$
 is unique if:\\

(1) the process $(f(t, 0, 0))_{t\in [0,T ]} \in L^2(0,T)$,

(2) $(d\bP\times dt)$  a.s., $(y, z)\rightarrow f (\omega, t, y, z)$ is continuous,

(3) $f$ is monotonic in $y$, i.e., there exists a constant $\mu \geq 0$, such that, $(d\bP\times dt)$  a.s.,
$$\forall y_1,y_2,z,  \Big(f(\omega, t, y_1, z) - f(\omega, t, y_2, z)  \Big)(y_1-y_2) \leq \mu  (y_1-y_2)^2\;,$$

(4) $f$ has a general growth with respect to $y$, i.e., $(dP\times dt)$ .a.s.,
$$\forall y, \vert f(\omega,t,y,0)\vert \leq \vert f(\omega,t,0,0)\vert +\varphi(|y|)$$
where $\varphi : \R \rightarrow \R^+$ is an increasing continuous function,

(5) $f$ is uniformly continuous in $z$ and uniform w.r.t. $(\omega, t, y)$, i.e., there exists a continuous, non-decreasing
function $\phi$ from $\R^+$ to itself with at most linear growth and $ƒ \phi(0) = 0$ such that $(d\bP\times dt)$  a.s.,
$$\forall  y,z_1,z_2, \quad \vert f(\omega,t,y,z_1)-f(\omega,t,y,z_2)\vert \leq \phi(\vert z_1-z_2\vert )\;. $$

{\bf Step 2.} \textit{Comparison}.

We now show that the solution of the BSDE \eqref{BSDEmodified} is lower bounded by $m$, and this is accomplished via a comparison result for solutions of Brownian BSDEs. We remark that the following inequality holds
$$
\bar{g} (t,y, z) \geq \Big(  \frac{1}{2} q (q-1)
|\theta^0_t|^2 -q r^0_t - (1-q)\lambda_t^\ff \Big) y - q \theta^0_t z =: g(t, y, z)\;.%  := {g} (t,y, z)\;.
$$
Therefore, we introduce the following linear BSDE
\begin{equation} \label{BSDEunder} \left\{\begin{array}{rcl}
-d Z_t & = & {g} (t,Z_t, \widehat {z}_t)dt -  \widehat {z}_t d B_t\;, \\
 Z_T &=& 1\;.
 \end{array}\right.
\end{equation}
In the same way as we proceed with the BSDE \eqref{BSDEafter}, we have an explicit form of the solution of the BSDE (\ref{BSDEunder}) given by
$$
Z_t  = \bE \big( {\Upsilon}_T^t \big| \F_t   \big)\;,
$$
where  $(\Upsilon_s^t)_{t \leq s \leq T}$ stands for the solution
of the linear SDE
$$
d\Upsilon_s^t = \Upsilon_s^t   \Big[ \Big(   \frac{1}{2}q (q-1)
 |\theta^0_s|^2 - q r^0_s -(1-q)\lambda_s^\ff \Big)ds - q \theta^0_s
d B_s \Big] \;, \quad \Upsilon_t^t =1 \;.
$$
We can rewrite the solution of the BSDE (\ref{BSDEunder}) under the following form
$$
Z_t = \bE^{*} \Big[ \exp \Big( \int_t^T \big( 
\frac{1}{2}q (q-1)  |\theta^0_s|^2 -q r^0_s - (1-q)\lambda_s^\ff \big) ds
\Big) \Big| \F_t  \Big] \;,
$$
where $\bE^{*}$ is the expectation under the probability $\bP^{*}$  %which is
defined by its Radon-Nikodym density $d \bP^{*}|_{\F_t} =  \E (-\int_0^{\cdot} q \theta^0_v d B_v )_t d\bP|_{\F_t} $ for any $t \in [0,T]$. %    $\frac{d \bP^{0,*}}{d \bP} | _{\F_T} := \E \left(-\int_0^{\cdot} q \theta^0_v d B_v \right)_T$.
 By virtue of the assumption {\bf (H4)}, it follows that 
$$
Z_t \geq \bE^{*} \Big[ \exp \Big( -\int_t^T (1-q)\lambda_s^\ff
ds\Big) \Big| \F_t  \Big] \geq m \;.
$$
From the comparison theorem for Brownian BSDEs, we obtain
%We now apply a  comparison result for solutions of BSDEs (see a slightly modified version of Proposition 1.2 in El Karoui\emph{ et al.} %, Hamadene, Matoussi
%\cite{EHM},) from which we deduce
$$
Y_t \geq Z_t \geq m \;,
$$
which implies that $Y_t \vee m= Y_t$ for any $t\in [0,T]$. Therefore, $(Y, \widehat y)$ is a solution of the BSDE (\ref{BSDEbefore}) in $\mathcal{S}^{\infty,+}_{\ff}(0, T) \times
\mathcal{H}^2_{\ff}(0, T)$.\\

{\bf Step 3.}  \textit{Uniqueness of the solution}.
Suppose that the BSDE \eqref{BSDEbefore} has two solutions  $(Y^1,Z^1)$ and $(Y^2,Z^2)$ in $\mathcal{S}^{\infty,+}_{\ff}(0, T) \times \mathcal{H}^2_{\ff}(0,T)$. Thus, there exists a positive constant $c$ such that $Y^1_t \geq c$ and $Y^2_t \geq c$ for any $t \in [0,T]$. In this case, $(Y^1,Z^1)$ and $(Y^2,Z^2)$ are solutions of the following BSDE
\begin{equation*} 
\left\{\begin{array}{rcl}
 -d Y_t
& = &h (t, Y_t, \widehat{y}_t)dt - \widehat{y}_t d B_t\;,\\
Y_T&=&1\;,
\end{array}\right.
\end{equation*}
where the generator $h$ is given by
$$
h (t,y, z) := 1+\Big(-q r^0_t + \frac{1}{2} q (q-1)
|\theta^0_t|^2 - (1-q)\lambda_t ^\ff\Big) y - q \theta^0_t z
   + (1-q)  \lambda_t^\ff ( \Phi^1_t (t) )^{1-p} (y \vee c)^p\;.
$$
 From \cite{FanJiang}, we know that this   BSDE
admits a unique solution, therefore we get $Y^1=Y^2$.
\end{proof}
\vspace{2mm}

\noindent We  are now able to prove Theorem \ref{theoreme existence}.

\begin{proof} From Propositions \ref{existence BSDEafter} and \ref{existence BSDEbefore} and Theorem 3.1 in \cite{KL}, we obtain that the BSDE \eqref{BSDEPhi0} admits a solution $(\Phi,  \widehat{\varphi} ,  \tilde{\varphi}) $
belonging to $\mathcal{S}^{\infty}_{\bG}(0, T) \times
\mathcal{H}^2_{\bG}(0,T)   \times
 \mathcal{H}^2_{\bG}(M)$ given by
\begin{equation} \label{solPhi} \begin{split}
\Phi_t  &= \Phi^0_t \1_{t <\tau } + \Phi^1_t(\tau)  \1_{t \geq \tau}\;,\\
\widehat \varphi_t &= \widehat \varphi^0_t \1_{t \leq \tau } + \widehat \varphi^1_t(\tau)  \1_{t > \tau}\;,\\
\tilde{\varphi}_t &= (\Phi^1_t(t) - \Phi^0_t)  \1_{t \leq \tau}\;.
\end{split}
\end{equation}
Note that $\widehat \varphi$ and $\tilde{\varphi}$ are $\bG$-predictable
processes. Moreover, from Propositions \ref{existence BSDEafter} and
\ref{existence BSDEbefore}, there exists a positive constant $C$
such that $\Phi_t \geq C$.
% and that $\Phi \in
%\mathcal{S}^{\infty}_{\bG}(0, T)$. By definition of $\widehat \varphi$
%it is obvious that $\widehat \varphi \in \mathcal{H}^2_{\bG}(0,T)$ and
%$\tilde \varphi \in \mathcal{H}^2_{\bG}(M)$ from Propositions
%\ref{existence BSDEafter} and \ref{existence BSDEbefore}.
%Moreover,
We also remark that
$$
\Phi_{t^-} + \tilde{\varphi}_t =     \Phi^1_t(t)  \1_{t \leq \tau}    + \Phi^1_t(\tau)  \1_{t > \tau} = \Phi^1_t(t \wedge \tau)\;,
$$
which implies
that $\Phi_{t^-} +\tilde \varphi_t \geq 1$.
\end{proof}
\begin{Remark}\label{Remark:Deterministic:Case}
We remark that if $r^1$ and $\theta^1$ are deterministic, $\Phi^1$ is deterministic. Moreover, if $r^0$, $\theta^0$ and $\lambda^\ff$ are deterministic, $\Phi^0$ is deterministic. More precisely, $\widehat \varphi^1_t(u) = \widehat \varphi^0_t =0$, and the BSDEs (\ref{BSDEafter}) and (\ref{BSDEbefore}) turn into ODEs
\begin{equation*}
\left\{\begin{array}{rcl}
d \Phi^1_t (u)  &= & \Big[ \big( q r^1_t(u) - \frac{1}{2}q (q-1)  |\theta^1_t(u)|^2 \big) \Phi^1_{t}(u) ) -1 \Big]dt  \;,\\
&&\\
\Phi^1_T(u) & =&1 \;,
\end{array}\right.
\end{equation*}
and
\begin{equation*}
\left\{\begin{array}{rcl}
 d \Phi^0_t
& = &\Big[ \Big( q r^0_t - \frac{1}{2} q (q-1) |\theta^0_t|^2 +(1-q)\lambda_t^\ff \Big) \Phi^0_{t}   - (1-q)  \lambda_t^\ff ( \Phi^1_t (t)  )^{1-p}  (\Phi^0_t )^p  \big) -1 \Big]dt \;,\\
 \Phi^0_T &= &1\;,
\end{array}\right.
\end{equation*}
with an explicit solution for the first equation.
\end{Remark}

\begin{Remark}
If $r_t^1(u) = r^0_t$ for any $0\leq u\leq t\leq T$, there is no change of regime. Our result is coherent with that obvious observation, since, in that case,  we have that $\theta_t^1(u) = \theta^0_t$
for any $0\leq u\leq t\leq T$ which implies $\Phi^0_t = \Phi^1_t(t)$ for any $t\in[0, T]$.
\end{Remark}
\subsection{A verification Theorem}
We now turn to the sufficient condition of optimality. In this part, we prove that the family of processes $\{(J^{(d)}_t(\gamma))_{t \in [0,T]} ~: ~\gamma \in \Gamma\}$ defined by $J^{(d)}(\gamma) :=\int_0^.( R_s L^{\gamma}_s)^q ds +  ( R L^{\gamma})^q \Phi$ with $\Phi$ defined by (\ref{solPhi}) satisfies the conditions (i), (ii), (iii) and (iv). By construction, $J^{(d)}(\gamma)$ satisfies the conditions (i) and (ii). As explained previously a candidate to be an optimal $\gamma $ is a process $\gamma^*$ such that $J^{(d)}(\gamma^*)$ is a $\bG$-martingale, hence this one is
%$$
%\gamma_t^* := \left(\frac{\Phi_{t^-}}{\Phi_{t^-} + \tilde{\varphi}_t}\right)^{\frac{1}{q-1}} - 1.
%$$
\begin{equation} \label{optimalgamma}
\gamma_t^* := \Big(\frac{\Phi_{t^-}}{\Phi_{t^-} + \tilde{\varphi}_t}\Big)^{\frac{1}{q-1}} - 1 \;.
\end{equation}

\begin{Lemma}
The process $\gamma^*$ defined by (\ref{optimalgamma}) is admissible. %\textcolor{red}{est-ce que on doit pr\'eciser admissible $\gamma$?}
\end{Lemma}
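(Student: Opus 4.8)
The plan is to verify the three defining requirements of $\Gamma$ for $\gamma^*$: $\bG$-predictability, a two-sided bound $-1 < A \le \gamma^*_t \le C$ on $[0,\tau]$, and $\gamma^*_t = 0$ on $(\tau,T]$. The inputs are the explicit form \eqref{solPhi} of $(\Phi,\widehat\varphi,\tilde\varphi)$, the bounds $\Phi^0 \in \mathcal{S}^{\infty,+}_{\ff}(0,T)$ from Proposition \ref{existence BSDEbefore} and $1 \le \Phi^1_t(u) \le C$ (uniformly in $u$) from Proposition \ref{existence BSDEafter}, and the identity $\Phi_{t^-} + \tilde\varphi_t = \Phi^1_t(t\wedge\tau) \ge 1$ established at the end of the proof of Theorem \ref{theoreme existence}, which guarantees that all quotients and powers below are well defined.

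First I would dispose of the region after $\tau$. For $t > \tau$ formula \eqref{solPhi} gives $\tilde\varphi_t = 0$, while $\Phi_{t^-} = \Phi^1_{t^-}(\tau) > 0$ since $\Phi \ge C > 0$; hence $\gamma^*_t = (\Phi_{t^-}/\Phi_{t^-})^{1/(q-1)} - 1 = 0$, which is exactly the required behaviour on $(\tau,T]$. On $[0,\tau]$, $\Phi^0$ is continuous (it solves the Brownian BSDE \eqref{BSDEbefore}), so $\Phi_{t^-} = \Phi^0_t$ there, and $\Phi_{t^-} + \tilde\varphi_t = \Phi^1_t(t)$, whence $1 + \gamma^*_t = \big(\Phi^0_t/\Phi^1_t(t)\big)^{1/(q-1)}$. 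By Propositions \ref{existence BSDEafter} and \ref{existence BSDEbefore} there are constants $0 < c_1 \le c_2$ with $c_1 \le \Phi^0_t \le c_2$ and $1 \le \Phi^1_t(t) \le c_2$, so the ratio $\Phi^0_t/\Phi^1_t(t)$ stays in the compact subinterval $[c_1/c_2,\,c_2]$ of $(0,\infty)$. Since $q = p/(p-1) < 0$ we have $1/(q-1) < 0$, and $x \mapsto x^{1/(q-1)}$ is continuous and strictly positive on $(0,\infty)$; composing it with the bounded ratio yields positive constants $\delta \le D$ with $\delta \le 1 + \gamma^*_t \le D$ on $[0,\tau]$, i.e. $-1 < \delta - 1 \le \gamma^*_t \le D-1$, which is the sought bound.

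It remains to check $\bG$-predictability: $\Phi_{\cdot^-}$ is left-continuous and $\bG$-adapted, hence $\bG$-predictable, $\tilde\varphi$ is $\bG$-predictable by Theorem \ref{theoreme existence}, and $\Phi_{t^-} + \tilde\varphi_t \ge 1$ is bounded away from $0$; therefore $\gamma^* = \big(\Phi_{\cdot^-}/(\Phi_{\cdot^-}+\tilde\varphi)\big)^{1/(q-1)} - 1$, being a fixed continuous function of $\bG$-predictable processes, is $\bG$-predictable. Assembling the three points gives $\gamma^* \in \Gamma$. There is no genuinely hard step here; the only points requiring care are invoking $\Phi_{t^-} + \tilde\varphi_t \ge 1 > 0$ so that the generator-induced formula \eqref{optimalgamma} makes sense pathwise, and using that the bounds on $\Phi^1$ are uniform in the parameter, so that they apply along the diagonal $u = t$.
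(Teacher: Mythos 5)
Your proof is correct and follows essentially the same route as the paper's (much terser) argument: predictability by construction, and the two-sided bound on $[0,\tau]$ from the uniform upper and lower bounds on $\Phi^0$ and $\Phi^1_t(t)$ coming from Propositions \ref{existence BSDEafter} and \ref{existence BSDEbefore} via the decomposition \eqref{solPhi}. Your additional checks (vanishing after $\tau$, uniformity of the bounds on $\Phi^1$ in the parameter $u$) are exactly the details the paper leaves implicit.
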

\begin{proof} By construction, $\gamma^*$ is $\bG$-predictable.
Moreover, from Theorem \ref{theoreme existence}, we remark that
there exists two constants $A$ and $C$ such that $-1 < A \leq \gamma^*_t \leq C$ for
any $t\in [0,T]$  which implies that $\gamma^*\in\Gamma$.%$\bE[L^{\gamma^*}_T]=1$.
\end{proof}

%Before proving that the process $J^{(d)}(\gamma)$ is a $\bG$-submartingale for any admissible process $\gamma \in \Gamma$ and is a $\bG$-martingale for  $\gamma^*$, w
From the above results, $  J^{(d)}(\gamma) $ is a semi-martingale with a local martingale part and a non-decreasing predictable variation part and $  J^{(d)}(\gamma ^*)$ is a local martingale. 
%We now give an integrability property about the \mb{Radon Nikodym densities  $L^\gamma$}.
%\begin{Lemma}\label{uniforme integrable}
%For any $\gamma \in \Gamma$, the process $L^\gamma$ satisfies
%$$\bE \Big[ \sup_{t \in [0,T]} (L^\gamma_t)^q \Big] < \infty\;.$$
%\end{Lemma}
%
%\begin{proof}
%From It\^o's formula, we get
%$$d(L^\gamma_t)^q = (L^\gamma_{t^-})^q \Big[ \Big( \frac{1}{2}q(q-1) | \theta_t|^2 - q \lambda_t^\bG \gamma_t + \lambda_t^\bG \big( (1+ \gamma_t)^q -1 \big) \Big) dt - q \theta_t dB_t +  \big( (1+ \gamma_t)^q -1 \big)dM_t \Big]\;.$$
%This can be written under the following form
%$$(L^\gamma_t)^q = K_t \mathcal{E}\Big(-\int_0^. q \theta_s dB_s + \int_0^. \big( (1+ \gamma_s)^q -1 \big)dM_s \Big)_t \;,$$
%with $K$ defined by
%$$K_t := \exp \Big( \int_0^t \Big( \frac{1}{2}q(q-1) | \theta_s|^2 - q \lambda_s^\bG \gamma_s + \lambda_s^\bG \big( (1+ \gamma_s)^q -1 \big) \Big) ds \Big)\;.$$
%Therefore,   there exists a positive constant $C$ such that
%$$\bE \Big[ \sup_{t \in [0,T]} (L^\gamma_t)^q \Big]\leq  C\bE\Big[ \sup_{t \in [0,T]}  \mathcal{E}\Big(-\int_0^. q \theta_s dB_s \Big)_t \Big]\;,$$
%since by assumption $\theta$, $\lambda$, $\gamma$ are bounded and also $(1 + \gamma)^q$. We conclude by using BDG inequality.
%\end{proof}
%

\begin{Proposition}\label{prop:verification:martingale}
The process $J^{(d)}(\gamma)$ is a  $\bG$-submartingale for any admissible process $\gamma \in \Gamma$ and is a $\bG$-martingale for  $\gamma^*$ given by \eqref{optimalgamma}.
\end{Proposition}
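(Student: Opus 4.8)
The plan is to verify conditions (iii) and (iv) directly from the decomposition of $J^{(d)}(\gamma)$ established just before the statement. Recall that integration by parts yields
\[
dJ^{(d)}_t(\gamma) = (R_t L^\gamma_{t^-})^q A^\gamma_t\, dt + (R_t L^\gamma_{t^-})^q\Big((\widehat\varphi_t - q\theta_t\Phi_t)\,dB_t + \big[(\Phi_{t^-}+\tilde\varphi_t)(1+\gamma_t)^q - \Phi_{t^-}\big]\,dM_t\Big),
\]
where $A^\gamma_t = \lambda^\bG_t a_t(\gamma_t) + 1 + f(t,\Phi_t,\widehat\varphi_t,\tilde\varphi_t) - qr_t\Phi_{t^-} + \tfrac12 q(q-1)|\theta_t|^2\Phi_{t^-} - \lambda^\bG_t\Phi_{t^-} - q\theta_t\widehat\varphi_t$. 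The first thing I would do is substitute the explicit choice \eqref{genedual} of $f$ into $A^\gamma_t$ and simplify: by construction of $f$ the terms collapse so that $A^\gamma_t = \lambda^\bG_t\big(a_t(\gamma_t) - \underline a_t\big)$, where $\underline a_t = \min_{x>-1} a_t(x) = (1-q)\Phi_{t^-}^p(\Phi_{t^-}+\tilde\varphi_t)^{1-p} + q\Phi_{t^-}$. Since Theorem~\ref{theoreme existence} guarantees $\Phi_{t^-}\geq C>0$ and $\Phi_{t^-}+\tilde\varphi_t\geq 1>0$, the function $x\mapsto a_t(x) = (\Phi_{t^-}+\tilde\varphi_t)(1+x)^q - q\Phi_{t^-}x$ is strictly convex on $(-1,\infty)$ (as $q<0$ makes $(1+x)^q$ convex), so $a_t(\gamma_t) - \underline a_t \geq 0$ for every $\gamma\in\Gamma$, with equality exactly when $\gamma_t = \gamma^*_t$ given by \eqref{optimalgamma}. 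Together with $\lambda^\bG_t\geq 0$ this shows the finite-variation part $\int_0^\cdot (R_s L^\gamma_{s^-})^q A^\gamma_s\,ds$ is non-decreasing, and vanishes identically for $\gamma = \gamma^*$.

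Next I would argue that the local-martingale part of $J^{(d)}(\gamma)$ is a genuine martingale. The stochastic integral against $B$ and $M$ has integrand $(R_t L^\gamma_{t^-})^q$ times bounded processes: $R$ is bounded by \textbf{(H3)}, $\theta$, $\Phi$, $\widehat\varphi$ need care since $\widehat\varphi\in\mathcal{H}^2_\bG$ is only square-integrable, but $\Phi$ is bounded, $\theta$ is bounded by \textbf{(H4)}, $\lambda^\bG$ is bounded, and $(1+\gamma_t)^q$ is bounded for $\gamma\in\Gamma$. The key integrability input is Lemma~\ref{uniforme integrable}, which gives $\bE[\sup_{t\in[0,T]}(L^\gamma_t)^q]<\infty$. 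Combining this with the boundedness of $R$ and the coefficients, and using the Burkholder–Davis–Gundy inequality together with Cauchy–Schwarz to handle the $\widehat\varphi\,dB$ term (whose quadratic variation involves $\int_0^T (R_s L^\gamma_{s^-})^{2q}|\widehat\varphi_s|^2\,ds \leq \text{const}\cdot\sup_s (L^\gamma_s)^{2q}\int_0^T|\widehat\varphi_s|^2\,ds$), one shows the local martingale is of class (D), hence a true martingale.

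Putting the two pieces together: for any $\gamma\in\Gamma$, $J^{(d)}(\gamma)$ is the sum of a true martingale and a non-decreasing predictable process, hence a $\bG$-submartingale, which is (iii); and for $\gamma^* $ the non-decreasing part is zero, so $J^{(d)}(\gamma^*)$ is a true $\bG$-martingale, which is (iv). I expect the main obstacle to be the martingale (as opposed to merely local-martingale) property of the stochastic-integral part: one must be careful that the $\widehat\varphi\,dB$ contribution, where $\widehat\varphi$ is only in $\mathcal{H}^2_\bG$, is integrable enough — this is precisely why Lemma~\ref{uniforme integrable} was proved in advance, and the argument hinges on the product $\sup(L^\gamma)^{q}$ (with finite moments) times the $\mathcal H^2$-norm of $\widehat\varphi$ being integrable via Cauchy–Schwarz. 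The algebraic simplification of $A^\gamma_t$ into $\lambda^\bG_t(a_t(\gamma_t)-\underline a_t)$, while routine, is the other step that must be done carefully to make the convexity argument transparent.
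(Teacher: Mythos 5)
Your identification of the drift as $A^\gamma_t=\lambda^\bG_t\big(a_t(\gamma_t)-\underline a_t\big)$ with $\underline a_t=a_t(\gamma^*_t)$, and the convexity argument showing $A^\gamma\geq 0$ with equality for $\gamma^*$, is exactly what the paper does. Where you diverge is the integrability step. The paper does \emph{not} estimate the stochastic integral directly: it first notes $\bE[\sup_t J^{(d)}_t(\gamma)]<\infty$ (from Lemma \ref{uniforme integrable} and $\Phi$ bounded), localizes to get $\bE[J^{(d)}_{t\wedge T_n}(\gamma)]=J^{(d)}_0(\gamma)+\bE[\int_0^{t\wedge T_n}(R_sL^\gamma_{s^-})^qA^\gamma_s\,ds]$, uses monotone convergence on the non-negative drift to conclude $\bE[\int_0^T(R_sL^\gamma_{s^-})^qA^\gamma_s\,ds]<\infty$, and then reads off $\bE[\sup_t|\int_0^t(R_sL^\gamma_{s^-})^q\,dM^\gamma_s|]<\infty$ by difference, since the local martingale equals $J^{(d)}(\gamma)-J^{(d)}_0(\gamma)-\int_0^\cdot(R_sL^\gamma_{s^-})^qA^\gamma_s\,ds$ and all three terms are controlled. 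This route only ever uses the exponent $q$ and sidesteps any direct estimate of the Brownian integral. Your BDG--Cauchy--Schwarz route is viable but has a small caveat you should make explicit: bounding $\bE\big[\big(\int_0^T(R_sL^\gamma_{s^-})^{2q}|\widehat\varphi_s|^2\,ds\big)^{1/2}\big]$ by Cauchy--Schwarz requires $\bE[\sup_t(L^\gamma_t)^{2q}]<\infty$, which is \emph{not} what Lemma \ref{uniforme integrable} states (it gives the exponent $q$ only). The lemma's proof does extend verbatim to the exponent $2q$ (the Dol\'eans-Dade factorization and BDG work for any fixed negative exponent since $\theta$, $\gamma$, $\lambda^\bG$ are bounded), so this is a repairable omission rather than a fatal gap, but as written your appeal to the lemma does not quite deliver the moment you need. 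The rest --- boundedness of $R^q$, $\Phi$, $\tilde\varphi$, $(1+\gamma)^q$, and the treatment of the $dM$ integral --- is fine.
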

\begin{proof}
From \eqref{expldual} and \eqref{BSDEPhi0}, we can rewrite the dynamics of $J^{(d)}(\gamma)$ under the following form
$$
d J^{(d)}_t(\gamma)=(R_tL^{\gamma}_{t^-})^q  \big( dM^\gamma_t +  A^\gamma_t dt\big) \;,
$$
where
$$dM^\gamma_t= \big ({\widehat{\varphi}_t} -   q \theta_t   \Phi_{t^-}  \big)d B_t  +   \Big( (1+ \gamma_t)^q\big( \Phi_{t^-}  +  {\tilde{\varphi}_t}\big)   -  \Phi_{t^-}  \Big)  d M_t \;,
$$
and
$$
A_t^{\gamma}   =  { \lambda_t^\bG}  \big[  a_t(\gamma_t)-a_t(\gamma_t^*) \big]\;,
$$
with $a(.)$ defined by (\ref{function:drift:A}). \\
From (\ref{eq:J:dual}), Lemma \ref{uniforme integrable} and since $\Phi\in\mathcal{S}^{\infty}_{\bG}(0, T)$, we remark that for any $\gamma\in\Gamma$
\begin{eqnarray}\label{J:bounded}
\bE \Big[ \sup_{t \in [0,T]} J^{(d)}_t(\gamma) \Big] < \infty\;.
\end{eqnarray}
For any $\gamma\in\Gamma$, we have that $\int_0^.(R_sL^{\gamma}_{s^-})^q dM_s^\gamma$ is a $\bG$-local martingale. Hence, there exists an increasing sequence of $\bG$-stopping times $(T_n)_{n\in\mathbb{N}}$ valued in $[0,T]$ satisfying $\lim_{n\to\infty}T_n = T\;, \;\mathbb{P}-a.s.$ such that
%$(M^{\gamma}_{.\wedge T_n})$ is a $\bG$-martingale. From Lemma \ref{uniforme integrable} and since $\Phi\in\mathcal{S}^{\infty}_{\bG}(0, T)$, we have that
$\int_0^{.\wedge T_n}(R_sL^{\gamma}_{s^-})^q  dM^\gamma_s$ is a $\bG$-martingale for any $n \in \mathbb N$. Therefore, we obtain for any $t\in[0, T]$
\begin{eqnarray*}
	\bE \Big[J^{(d)}_{t\wedge T_n}(\gamma)\Big] = J^{(d)}_0(\gamma) + \bE\Big[\int_0^{t\wedge T_n}(R_sL^{\gamma}_{s^-})^q  A^\gamma_s ds\Big]\;.
\end{eqnarray*}
Since $(RL^{\gamma})^q A^\gamma \geq 0$, from (\ref{J:bounded}) and using the monotone convergence theorem, we obtain
$$\bE \Big[ \int_0^T (R_t L^\gamma_{t^-})^q   A^\gamma_tdt \Big] < \infty \;.$$
From (\ref{J:bounded})  and the previous inequality, we have
$$\bE \Big[ \sup_{t\in[0, T] }   \Big|\int_0^t (R_s L^\gamma_{s^-})^q   dM^\gamma_s \Big| \Big] < \infty\;.$$
It follows that  the local martingale $\int_0^. (R_s L^\gamma_{s^-})^q   dM^\gamma_s$ is a true martingale and the process $J^{(d)}(\gamma)$ is a $\bG$-submartingale for any $\gamma \in \Gamma$. We obtain with the same arguments that the process $J^{(d)}(\gamma^*)$ is a martingale.

\end{proof}

\subsection{Uniqueness of the solution of the BSDE  (\ref{BSDEPhi})}\label{sous section unite}
To solve the dual problem it is not necessary to prove the uniqueness of the solution of the BSDE  (\ref{BSDEPhi}) but this one is useful to characterize the value function of the primal problem in the last part of this paper. To prove the uniqueness we do not use a comparison theorem for BSDE but the following dynamic programming principle.
\begin{Lemma}\label{unicite lemme 1}
Let $Y$ be a process  with $Y_T=1$ such that $\int_0^. (R_s L^\gamma_s)^q ds +(RL^\gamma)Y$ is a $\bG$-submartingale for any $\gamma \in \Gamma$ and there exists $\gamma^\prime \in \Gamma$ such that $\int_0^. (R_s L^{\gamma^\prime}_s )^q ds +(RL^{\gamma^\prime})Y$ is a $\bG$-martingale. Then, we have
$$Y_t = \essinf_{\gamma \in \Gamma} \Big\{ \frac{1}{(R_t L^\gamma_t)^q} \bE\Big[ \int_t^T (R_s L^\gamma_s)^q ds + (R_T L^\gamma_T)^q \Big| \mathcal G_t\Big] \Big\}\;.$$
\end{Lemma}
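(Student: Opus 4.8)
The plan is to show that the process $Y$ satisfying the stated submartingale/martingale property coincides with the essential infimum on the right-hand side, by a standard dynamic-programming (or "envelope") argument applied at a generic time $t\in[0,T]$ rather than at $0$. First I would fix $t$ and an arbitrary $\gamma\in\Gamma$. By the submartingale assumption for this $\gamma$, the process $N^\gamma:=\int_0^\cdot (R_sL^\gamma_s)^q\,ds + (R L^\gamma) Y$ satisfies $\bE[N^\gamma_T\mid\mathcal G_t]\ge N^\gamma_t$, which after subtracting $\int_0^t(R_sL^\gamma_s)^q\,ds$ (which is $\mathcal G_t$-measurable) and using $Y_T=1$ gives
$$
\bE\Big[\int_t^T (R_sL^\gamma_s)^q\,ds + (R_TL^\gamma_T)^q \,\Big|\, \mathcal G_t\Big] \;\ge\; (R_tL^\gamma_t)^q\, Y_t\;.
$$
Dividing by $(R_tL^\gamma_t)^q>0$ (note $q<0$, so this quantity is positive and $\mathcal G_t$-measurable, and division preserves the inequality since it is positive) yields
$$
Y_t \;\le\; \frac{1}{(R_tL^\gamma_t)^q}\,\bE\Big[\int_t^T (R_sL^\gamma_s)^q\,ds + (R_TL^\gamma_T)^q \,\Big|\, \mathcal G_t\Big]\;,
$$
and since $\gamma\in\Gamma$ was arbitrary, $Y_t \le \essinf_{\gamma\in\Gamma}\{\cdots\}$.

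For the reverse inequality I would use the distinguished process $\gamma'\in\Gamma$ for which $N^{\gamma'}$ is a genuine $\bG$-martingale. Then the inequality above becomes an equality for $\gamma=\gamma'$:
$$
Y_t \;=\; \frac{1}{(R_tL^{\gamma'}_t)^q}\,\bE\Big[\int_t^T (R_sL^{\gamma'}_s)^q\,ds + (R_TL^{\gamma'}_T)^q \,\Big|\, \mathcal G_t\Big] \;\ge\; \essinf_{\gamma\in\Gamma}\Big\{\cdots\Big\}\;,
$$
the last step simply because $\gamma'$ is one admissible choice. Combining the two bounds gives the claimed identity.

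The only genuine care needed — and the step I expect to be the main (mild) obstacle — is the integrability/measurability bookkeeping required to justify taking conditional expectations and manipulating them as above: one must know that the integrands are integrable so that $\bE[N^\gamma_T\mid\mathcal G_t]$ is well defined and that $\bE[N^{\gamma'}_T\mid\mathcal G_t]=N^{\gamma'}_t$ genuinely holds (not merely up to localization). This is exactly what is supplied by Lemma \ref{uniforme integrable} together with $R$ bounded away from $0$ and $\infty$ and $\Phi\in\mathcal S^\infty_\bG(0,T)$: these give $\bE[\sup_{t\in[0,T]}|N^\gamma_t|]<\infty$ for every $\gamma\in\Gamma$ (cf. the estimate \eqref{J:bounded} in the proof of Proposition \ref{prop:verification:martingale}), so all conditional expectations are legitimate and the submartingale (resp. martingale) property is the honest one. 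A secondary, purely notational point is that $R_tL^\gamma_t$ should be read as its left limit where needed, but since $R$ is continuous and $L^\gamma$ has no jump at a fixed time $t$ (as $\tau$ is totally inaccessible), $(R_tL^\gamma_t)^q=(R_tL^\gamma_{t^-})^q$ almost surely and this causes no difficulty. With these observations in place the argument is the two-line envelope computation above.
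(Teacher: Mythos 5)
Your argument is correct and is essentially the paper's own proof: the submartingale property for arbitrary $\gamma$ gives the upper bound on $Y_t$, the martingale property for $\gamma'$ turns it into an equality, and the essential infimum follows. The extra integrability and measurability remarks you add are sound (though note $(R_tL^\gamma_t)^q>0$ simply because $R_tL^\gamma_t>0$, irrespective of the sign of $q$) and only make explicit what the paper leaves implicit.
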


\begin{proof}
 The following inequality holds for any $\gamma \in \Gamma$
$$Y_t \leq  \frac{1}{(R_t L^\gamma_t)^q} \bE\Big[ \int_t^T (R_s L^\gamma_s)^q ds + (R_T L^\gamma_T)^q \Big| \mathcal G_t\Big] \;.$$
Moreover, we know that
$$Y_t =  \frac{1}{(R_t L^{\gamma^\prime}_t)^q} \bE\Big[ \int_t^T (R_s L^{\gamma^\prime}_s)^q ds + (R_T L^{\gamma^\prime}_T)^q \Big| \mathcal G_t\Big] \;.$$
Therefore, we get
$$Y_t = \essinf_{\gamma \in \Gamma} \Big\{ \frac{1}{(R_t L^\gamma_t)^q} \bE\Big[ \int_t^T (R_s L^\gamma_s)^q ds + (R_T L^\gamma_T)^q \Big| \mathcal G_t\Big] \Big\}\;.$$
\end{proof}
We now prove that any solution of the BSDE  (\ref{BSDEPhi}) satisfies the properties of Lemma \ref{unicite lemme 1}.
\begin{Lemma}\label{unicite lemme 2}
Let $(\Phi, \widehat \varphi, \tilde \varphi) \in \mathcal{S}^{\infty,+}_{\bG}(0, T) \times
\mathcal{H}^2_{\bG}(0,T)   \times
 \mathcal{H}^2_{\bG}(M)$ be a solution of the BSDE  (\ref{BSDEPhi}). Then, the process $\int_0^. (R_s L^\gamma_s)^q ds +(RL^\gamma)\Phi$ is a $\bG$-submartingale for any $\gamma \in \Gamma$ and there exists $\gamma^\prime \in \Gamma$ such that $\int_0^. (R_s L^{\gamma^\prime}_s )^q ds +(RL^{\gamma^\prime})\Phi$ is a $\bG$-martingale.
\end{Lemma}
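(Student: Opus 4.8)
The plan is to reuse the computation already performed in the proof of Proposition \ref{prop:verification:martingale}, but now starting from an \emph{arbitrary} solution $(\Phi,\widehat\varphi,\tilde\varphi)$ of the BSDE \eqref{BSDEPhi} (equivalently \eqref{BSDEPhi0}) rather than the explicit one given by \eqref{solPhi}. First I would set $\widetilde J_t(\gamma):=\int_0^t(R_sL^\gamma_s)^q\,ds+(R_tL^\gamma_t)^q\Phi_t$ and apply integration by parts to the product $(RL^\gamma)^q\Phi$, using \eqref{expldual} for $d[(R_tL^\gamma_t)^q]$ and the BSDE dynamics \eqref{BSDEPhi} for $d\Phi_t$. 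Exactly as in the displayed computation preceding \eqref{function:drift:A}, this yields
\begin{equation*}
d\widetilde J_t(\gamma)=(R_tL^\gamma_{t^-})^q\big(dM^\gamma_t+A^\gamma_t\,dt\big)\;,
\end{equation*}
where $dM^\gamma_t=(\widehat\varphi_t-q\theta_t\Phi_{t^-})\,dB_t+\big((1+\gamma_t)^q(\Phi_{t^-}+\tilde\varphi_t)-\Phi_{t^-}\big)\,dM_t$ is a $\bG$-local martingale and, because $f$ was chosen in \eqref{genedual} precisely so that the drift equals $\lambda^\bG_t\big(a_t(\gamma_t)-\underline a_t\big)$, one has $A^\gamma_t=\lambda^\bG_t\big[a_t(\gamma_t)-a_t(\gamma^*_t)\big]\ge 0$ with $\gamma^*_t=\big(\Phi_{t^-}/(\Phi_{t^-}+\tilde\varphi_t)\big)^{1/(q-1)}-1$. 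Here I use the two structural facts guaranteed by the hypothesis $(\Phi,\widehat\varphi,\tilde\varphi)\in\mathcal S^{\infty,+}_{\bG}(0,T)\times\cdots$ together with $\Phi_{t^-}+\tilde\varphi_t\ge 1$ (part of the statement of Theorem \ref{theoreme existence}, but here simply assumed of the given solution): namely that $a_t(\cdot)$ attains its minimum over $(-1,\infty)$ at $\gamma^*_t$, and that $\gamma^*$ is admissible, i.e. $\gamma^*\in\Gamma$ — the latter follows just as in the Lemma preceding Proposition \ref{prop:verification:martingale}, since $\Phi$ is bounded above and below by positive constants and $\tilde\varphi$ keeps $\Phi_{t^-}+\tilde\varphi_t$ bounded.

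Next I would establish the integrability needed to upgrade "local submartingale" to "submartingale". Since $\Phi\in\mathcal S^\infty_{\bG}(0,T)$ and Lemma \ref{uniforme integrable} gives $\bE[\sup_{t\le T}(L^\gamma_t)^q]<\infty$, the bound $\bE[\sup_{t\le T}\widetilde J_t(\gamma)]<\infty$ holds for every $\gamma\in\Gamma$, exactly as in \eqref{J:bounded}. Then, localizing the $\bG$-local martingale $\int_0^\cdot(R_sL^\gamma_{s^-})^q\,dM^\gamma_s$ by stopping times $(T_n)$ increasing to $T$, taking expectations, using $A^\gamma\ge 0$ and monotone convergence, one gets $\bE[\int_0^T(R_sL^\gamma_{s^-})^qA^\gamma_s\,ds]<\infty$; combined with the uniform integrability of $\sup_t\widetilde J_t(\gamma)$ this forces $\int_0^\cdot(R_sL^\gamma_{s^-})^q\,dM^\gamma_s$ to be a genuine martingale, so $\widetilde J(\gamma)$ is a true $\bG$-submartingale. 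Applying the same reasoning with $\gamma=\gamma^*$, where $A^{\gamma^*}\equiv 0$, shows $\widetilde J(\gamma^*)$ is a $\bG$-martingale; this supplies the process $\gamma'$ required in the statement (take $\gamma'=\gamma^*$). Finally, since $\Phi_T=1$ by the terminal condition of \eqref{BSDEPhi}, $\widetilde J_T(\gamma)=\int_0^T(R_sL^\gamma_s)^q\,ds+(R_TL^\gamma_T)^q$, so $\widetilde J(\gamma)$ is exactly the process appearing in Lemma \ref{unicite lemme 1}.

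The only genuinely non-routine point is verifying that the drift really is nonnegative and vanishes at an admissible $\gamma^*$ \emph{for an arbitrary solution}, i.e. that the convexity/minimization argument behind \eqref{genedual}–\eqref{function:drift:A} depended only on the sign conditions $\Phi_{t^-}\ge C>0$ and $\Phi_{t^-}+\tilde\varphi_t\ge 1$ and on $q<0$, all of which are available by hypothesis; everything else is a transcription of the already-completed proof of Proposition \ref{prop:verification:martingale}. One should also double-check that the $dH$-integral in \eqref{BSDEPhi} is correctly rewritten as a $dM$-integral plus a $\lambda^\bG$-drift term when assembling $A^\gamma_t$, which is where the $(1-q)\lambda^\bG_t$ and $-\lambda^\bG_t\Phi_{t^-}$ contributions in the generator get absorbed.
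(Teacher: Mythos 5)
Your proposal matches the paper's proof essentially line for line: both apply integration by parts to $(RL^\gamma)^q\Phi$, identify the drift as $\lambda^\bG_t\big(a_t(\gamma_t)-a_t(\gamma'_t)\big)\geq 0$ with $\gamma'$ the pointwise minimizer of $a_t(\cdot)$ from \eqref{function:drift:A}, and then rerun the localization/monotone-convergence argument of Proposition \ref{prop:verification:martingale} using the integrability supplied by Lemma \ref{uniforme integrable} and $\Phi\in\mathcal{S}^{\infty,+}_{\bG}(0,T)$. If anything you are more explicit than the paper about the two points it passes over in silence, namely that $\gamma'$ must be checked to lie in $\Gamma$ and that the minimization argument requires $\Phi_{t^-}+\tilde\varphi_t$ to be bounded below by a positive constant for an \emph{arbitrary} solution, not just the one constructed in Theorem \ref{theoreme existence}.
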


\begin{proof}
To simplify the notation we denote $\cW^\gamma := \int_0^. (R_s L^\gamma_s)^q ds +(RL^\gamma)\Phi$. From It\^o's formula, we get for any $\gamma \in \Gamma$
$$
\cW^\gamma_t = (R_t L^\gamma_{t^-})^q \Big\{ \lambda_t \big((a_t(\gamma_t) - a_t(\gamma^\prime_t))\big) dt + \big(\widehat \varphi_t - q \theta_t  \Phi_t\big)dB_t +\big( (1+\gamma_t)^q(\Phi_t + \tilde \phi_t)  - \Phi_t\big)dM_t \Big\} \;,
$$
where $a(.)$ is defined by \reff{function:drift:A} and $\gamma^\prime_t := \Big( \frac{\Phi_{t^-}}{\Phi_{t^-} + \tilde \phi_t}\Big)^{p-1}-1$.\\
We know that $\bE[\sup_{0 \leq t \leq T} \cW^\gamma_t] < \infty$ from Lemma \ref{uniforme integrable} and $a_t(\gamma_t) \geq a_t(\gamma^\prime_t)$ for any $\gamma \in \Gamma$ by definition of $\gamma^\prime$. Therefore, using the same arguments as for the proof of Proposition \ref{prop:verification:martingale} we can prove that, for any $\gamma \in \Gamma$, the process $\int_0^. (R_s L^\gamma_s)^q ds +(RL^\gamma)\Phi$ is a $\bG$-submartingale  and  $\int_0^. (R_s L^{\gamma^\prime}_s )^q ds +(RL^{\gamma^\prime})\Phi$ is a $\bG$-martingale.
\end{proof}

We can conclude from Lemmas \ref{unicite lemme 1} and \ref{unicite lemme 2} that there exists a unique solution of the BSDE  (\ref{BSDEPhi}) in $\mathcal{S}^{\infty,+}_{\bG}(0, T) \times
\mathcal{H}^2_{\bG}(0,T)   \times
 \mathcal{H}^2_{\bG}(M)$.

\section{Primal problem and optimal strategy}\label{retour primal}
%\mb{je souhaiterais \'eviter de r\'esoudre \`a la main le pb
%primal et  je voudrais  obtenir les r\'esultats par dualit\'e. Il
%semble que ce soit tr\`es simple L ancienne version peut figurer
%dans la th\`ese de Nam}

In this section, we deduce the solution of the primal problem (\ref{pb})
using the duality result of the previous section, and we characterize the value function associated to the primal problem by the solution of a BSDE which is in relationship with
%the BSDE associated to the
%primal problem (see Hu\emph{ et al.} %, Imkeller, Muller
%\cite{HIM}) satisfies a duality relationship with the %\textcolor{red}{the
%the}
the BSDE (\ref{BSDEPhi0}) associated to the dual problem.\\

The following proposition shows the existence of an optimal solution for the primal problem and characterizes this solution in terms of the solution of the dual problem.
\begin{Proposition}\label{Prop:Existence:OptimalSolution}
The optimal strategy is given by
\begin{equation}  \label{optwealth}
c_t^* =  \big(  \eta^*R_t L_{t^-}^{\gamma^*} \big)^{\frac{1}{p-1}}\;, \quad \pi_t^* = \frac{1}{\sigma_t}\Big(\frac{\wh\varphi_t}{\Phi_{t^-}}+ \frac{\theta_t}{1-p}\Big)\;,\quad t\in[0, T]\;,
\end{equation}
where  $\eta^*$ is defined by
%\textcolor{red}{
\begin{equation} \label{optlambda}
 	\eta^* := \Big(\frac{x} {\bE\Big[\int_0^T(R_t L_t^{\gamma^*})^qdt+( R_T L_T^{\gamma^*})^q\Big]}   \Big) ^{p-1} \;,
\end{equation}
and $\gamma^*$ is given by (\ref{optimalgamma}).
%The optimal strategy is
%\begin{eqnarray*}
%    \pi_t^* = \Big(\frac{\wh\varphi_t}{\Phi_{t^-}}+ \frac{\theta_t}{1-p}\Big)\frac{1}{\sigma_t}\;.
%\end{eqnarray*}
\end{Proposition}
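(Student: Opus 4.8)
The plan is to exploit the duality relation \reff{eq:primal:dual:relation:1} together with the solution of the dual problem obtained in the previous section, and then to verify directly that the candidate strategy $(\pi^*,c^*)$ is admissible, self-financing, and achieves the upper bound. First I would recall that from the verification result (Proposition \ref{prop:verification:martingale} and Lemma \ref{uniforme integrable}) the dual infimum is attained at $\gamma^*$ given by \reff{optimalgamma}, with value $J^{(d)}_0(\gamma^*)=\bE\big[\int_0^T(R_sL^{\gamma^*}_s)^q ds+(R_TL^{\gamma^*}_T)^q\big]$; consequently, minimizing the one-variable map $\eta\mapsto \tilde V(\eta)+\eta x = -\frac{\eta^q}{q}J^{(d)}_0(\gamma^*)+\eta x$ over $\eta>0$ gives the first-order condition $\eta^{q-1}J^{(d)}_0(\gamma^*)=x$, i.e. exactly the formula \reff{optlambda} for $\eta^*$.

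Next I would define the candidate terminal wealth and consumption by the pointwise minimizer in the conjugate inequality \reff{relation:conjugate:UtoUtilde}, namely $X^*_T := I(\eta^* R_T L_T^{\gamma^*}) = (\eta^* R_T L_T^{\gamma^*})^{1/(p-1)}$ and $c^*_t := I(\eta^* R_t L_{t^-}^{\gamma^*}) = (\eta^* R_t L_{t^-}^{\gamma^*})^{1/(p-1)}$; with this choice all the inequalities in the chain leading to \reff{eq:primal:dual:relation:1} become equalities, provided $\bE\big(R_TL_T^{\gamma^*}X^*_T+\int_0^T R_sL_s^{\gamma^*}c^*_s ds\big)=x$, which holds by the definition of $\eta^*$. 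The remaining task is to show that $X^*_T$ and $c^*$ are generated by an admissible self-financing strategy with initial capital $x$, and that the associated portfolio proportion is the stated $\pi^*_t = \frac{1}{\sigma_t}\big(\frac{\wh\varphi_t}{\Phi_{t^-}}+\frac{\theta_t}{1-p}\big)$. To do this I would introduce the candidate wealth process $X^*_t := \frac{1}{R_t L_t^{\gamma^*}}\bE\big[R_TL_T^{\gamma^*}X^*_T+\int_t^T R_sL_s^{\gamma^*}c^*_s ds\,\big|\,\G_t\big]$, observe that $R_tL_t^{\gamma^*}X^*_t+\int_0^t R_sL_s^{\gamma^*}c^*_s ds$ is a $\bP$-martingale, and compute its dynamics. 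Using the explicit form $X^*_t = (\eta^*)^{1/(p-1)}(R_tL_t^{\gamma^*})^{q-1}\Phi_t$ — which one recognizes because $(R L^{\gamma^*})^q\Phi + \int_0^\cdot(R_sL_s^{\gamma^*})^q ds = J^{(d)}(\gamma^*)$ is precisely the dual martingale, so $X^*_t = (\eta^*)^{1/(p-1)}(R_tL_t^{\gamma^*})^{-1}\cdot (R_tL_t^{\gamma^*})^q\Phi_t$ and the martingale property transfers — I would apply It\^o's formula to $(R_tL_t^{\gamma^*})^{q-1}\Phi_t$ using \reff{expldual}, \reff{BSDEPhi0} and the definition \reff{optimalgamma} of $\gamma^*$, and match the resulting diffusion and drift terms against the self-financing wealth equation \reff{wealtheq}. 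Matching the $dB_t$ coefficient yields $\pi^*_t\sigma_t X^*_t = $ (the Brownian coefficient of $X^*$), which after dividing by $X^*_t$ and substituting $X^*_t\propto (R L^{\gamma^*})^{q-1}\Phi$ gives exactly $\pi^*_t = \frac{1}{\sigma_t}\big(\frac{\wh\varphi_t}{\Phi_{t^-}}+\frac{\theta_t}{1-p}\big)$ after using $q-1 = \frac{1}{p-1}$ and $-q\theta_t + (q-1)(-\theta_t)$-type bookkeeping; the jump term and the $dt$ term then have to cancel consistently, which is where the specific choice of generator $f$ in \reff{genedual} is used.

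The main obstacle I expect is the verification step: checking that the process $X^*$ constructed as a conditional expectation really has the self-financing dynamics \reff{wealtheq} with the claimed $\pi^*$, i.e. carrying out the It\^o expansion of $(R L^{\gamma^*})^{q-1}\Phi$ and identifying the drift, Brownian and jump parts so that (a) the finite-variation part reproduces $X^*_t(r_t+\pi^*_t(\nu_t-r_t))\,dt - c^*_t\,dt$, (b) the Brownian part reproduces $X^*_t\pi^*_t\sigma_t\,dB_t$, and (c) there is no jump in $X^*$ across $\tau$ beyond what self-financing allows (one should check $X^*$ is continuous, equivalently that the $dM_t$ coefficient vanishes — indeed it does, since $(1+\gamma^*_t)^q(\Phi_{t^-}+\tilde\varphi_t) - \Phi_{t^-} = 0$ by the very definition of $\gamma^*$, so $X^*$ has no jump, consistent with the wealth equation having no $dM$ term). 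Secondary technical points are: verifying the integrability $\bE\int_0^T|\pi^*_s\sigma_s|^2 ds<\infty$ and positivity $X^*_t>0$, both of which follow from $\Phi\in\mathcal{S}^{\infty,+}_{\bG}(0,T)$, $\wh\varphi\in\mathcal{H}^2_{\bG}(0,T)$, boundedness of $\theta$ and $1/\sigma$, and Lemma \ref{uniforme integrable}; checking $c^*\ge 0$, which is immediate; and confirming that with these equalities the value $V(x)$ equals the dual value, so that no admissible strategy does better — this last point is exactly the content of \reff{eq:primal:dual:relation:1} being saturated, hence $(\pi^*,c^*)$ is optimal.
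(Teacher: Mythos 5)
Your proposal follows essentially the same route as the paper: bound the primal value by the dual one via \reff{eq:primal:dual:relation:1}, optimize over $\eta$ to obtain $\eta^*$ from the first-order condition $\eta^{q-1}J^{(d)}_0(\gamma^*)=x$, and verify that the candidate read off from the dual optimizer is realized by an admissible self-financing strategy whose wealth is $X^{x,\pi^*,c^*}_t=(\eta^* R_t L_t^{\gamma^*})^{\frac{1}{p-1}}\Phi_t$. The only structural difference is the direction of the verification: the paper (Lemma \ref{Lemma:Admissible:Strategy}) writes the wealth SDE \reff{equation:sde:wealth} for the candidate $(\pi^*,c^*)$, checks by It\^o's formula and \reff{BSDEPhi0} that $(\eta^* R L^{\gamma^*})^{\frac{1}{p-1}}\Phi$ solves it, and concludes by uniqueness of the SDE solution, whereas you construct $X^*$ as a conditional expectation and extract $\pi^*$ by matching the Brownian coefficient; the It\^o computation is the same either way, and the paper then saturates the duality bound exactly as you describe (equations \reff{eq:optimality:A}--\reff{eq:optimality:B}).

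One concrete slip in your continuity argument: the identity you invoke, $(1+\gamma^*_t)^q(\Phi_{t^-}+\tilde\varphi_t)-\Phi_{t^-}=0$, is false. By \reff{optimalgamma} one has $(1+\gamma^*_t)^{q-1}=\Phi_{t^-}/(\Phi_{t^-}+\tilde\varphi_t)$, so $(1+\gamma^*_t)^q(\Phi_{t^-}+\tilde\varphi_t)=(1+\gamma^*_t)\Phi_{t^-}\neq\Phi_{t^-}$ in general; that expression is the jump coefficient of the dual martingale $J^{(d)}(\gamma^*)$, which does not vanish (a martingale may well jump). What continuity of $X^*\propto (R L^{\gamma^*})^{q-1}\Phi$ requires is that the jump ratio $(1+\gamma^*_\tau)^{q-1}(\Phi_{\tau^-}+\tilde\varphi_\tau)/\Phi_{\tau^-}$ equal $1$ --- exponent $q-1$, not $q$ --- and this does hold by the very definition of $\gamma^*$. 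So your conclusion that $X^*$ has no jump is correct once the exponent is fixed, and the rest of the argument goes through.
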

%\begin{Remark}
%The previous proposition gives the optimal strategy in terms of the solution of the dual BSDE (\ref{BSDEPhi}). We can derive the optimal strategy in the particular case with stochastic interest rate without change of regime. The optimal strategy of this classical case is often characterized with the solution of the BSDE associated to the primal problem (see, e.g., Cheridito and Hu \cite{Cheridito}).
%\end{Remark}
%\begin{Remark}
%At the time of regime change, $\Phi, \wh\varphi,\tilde{\varphi}$ and $\theta$ jump and although the price of risky asset $S$ and the portfolio process $X$ do not jump at that time, the investor needs to rebalance the portfolio and change the consumption policy. From Remark \ref{Remark:Deterministic:Case}, we have that if $r^1, \theta^1, r^0, \theta^0$ and $\lambda^\ff$ are deterministic, the investors adjusts the portfolio at $\tau$ as follows
%\begin{eqnarray*}
%	\Delta \pi^*_{\tau} = \frac{r_{\tau-}-r_\tau}{(1-p)\sigma_\tau^2}\;.
%\end{eqnarray*}
%\end{Remark}
Before proving Proposition \ref{Prop:Existence:OptimalSolution}, we prove that the strategy $(\pi^*, c^*)$ is admissible.

%As $c_t^* = -(\tilde{U})^\prime(\eta^* R_tL_t^{\gamma^*})$ and $X_T^*=-(\tilde{U})^\prime(\eta^* R_TL_T^{\gamma^*})$, we obtain \eqref{optwealth} easily. In order to prove that $X^*$ is the optimal wealth process, it is sufficient to check that $X^*_T$ is duplicable.
%We demonstrate first the following lemma
\begin{Lemma}\label{Lemma:Admissible:Strategy}
	The strategy $(\pi^*, c^*)$ given by (\ref{optwealth}) is admissible and the wealth associated to  $(\pi^*, c^*)$ is
	\begin{eqnarray}\label{wealth:process:X}
		X^{x, \pi^*, c^*}_t= \big(  \eta^*R_t L_t^{\gamma^*} \big)^{\frac{1}{p-1}}\Phi_t\;.
	\end{eqnarray}
\end{Lemma}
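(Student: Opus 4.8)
The plan is to check directly that the process $X^*_t:=(\eta^* R_t L^{\gamma^*}_t)^{1/(p-1)}\Phi_t$ solves the wealth equation \eqref{wealtheq} driven by the strategy $(\pi^*,c^*)$ of \eqref{optwealth} and started from $x$, and is continuous and strictly positive. Since \eqref{wealtheq} is a linear stochastic differential equation whose coefficients are locally bounded once $\Phi$ is bounded away from zero and $r,\nu,\sigma$ are bounded, its solution is unique, whence $X^{x,\pi^*,c^*}=X^*$, i.e. \eqref{wealth:process:X}. Admissibility of $(\pi^*,c^*)$ then reduces to the integrability of $(\Phi,\widehat\varphi)$ provided by Theorem \ref{theoreme existence}.

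First I would set $P_t:=(R_t L^{\gamma^*}_t)^{1/(p-1)}$ (recall $1/(p-1)=q-1<0$), so that $X^*_t=(\eta^*)^{q-1}P_t\Phi_t$ and $c^*_t=(\eta^*)^{q-1}P_{t^-}$. From $dR_t=-r_tR_t\,dt$ and $dL^{\gamma^*}_t=L^{\gamma^*}_{t^-}(-\theta_t\,dB_t+\gamma^*_t\,dM_t)$ one gets $d(R_tL^{\gamma^*}_t)=(R_tL^{\gamma^*}_{t^-})(-r_t\,dt-\theta_t\,dB_t+\gamma^*_t\,dM_t)$, and It\^o's formula for $x\mapsto x^{q-1}$ yields the dynamics of $P$, whose single jump is at $\tau$, with $\Delta P_\tau=P_{\tau^-}\big((1+\gamma^*_\tau)^{q-1}-1\big)$. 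The two identities to record, both consequences of \eqref{optimalgamma} and of $1/(q-1)=p-1$, are
$$(1+\gamma^*_t)^{q-1}=\frac{\Phi_{t^-}}{\Phi_{t^-}+\tilde\varphi_t},\qquad 1+\gamma^*_t=\Big(1+\frac{\tilde\varphi_t}{\Phi_{t^-}}\Big)^{1-p},$$
together with $1-q=\tfrac1{1-p}$ and $\Phi_{t^-}+\tilde\varphi_t\ge1>0$ (Theorem \ref{theoreme existence}).

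Next I would expand $d(P_t\Phi_t)$ by integration by parts, inserting the generator \eqref{genedual} for $d\Phi_t$ (replacing $\Phi_t$ by $\Phi_{t^-}$ in the $dt$-drift is harmless), and collect the $dt$-, $dB_t$- and $dH_t$-contributions. The $dH_t$-coefficient equals $\tilde\varphi_t+\big((1+\gamma^*_t)^{q-1}-1\big)(\Phi_{t^-}+\tilde\varphi_t)$, which is $0$ by the first identity; therefore $P\Phi$, hence $X^*=(\eta^*)^{q-1}P\Phi$, has no jump and is continuous, so in particular $X^*_{t^-}=X^*_t$. Matching the $dB_t$-part of $dX^*_t$ with $X^*_t\sigma_t\pi^*_t$ forces $\sigma_t\pi^*_t=\widehat\varphi_t/\Phi_{t^-}-(q-1)\theta_t=\widehat\varphi_t/\Phi_{t^-}+\theta_t/(1-p)$, which is exactly the value of $\pi^*$ in \eqref{optwealth}. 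For the $dt$-part, using $\pi^*_t(\nu_t-r_t)=\sigma_t\theta_t\pi^*_t$ and $1-q=1/(1-p)$, the $r_t\Phi_{t^-}$, $|\theta_t|^2\Phi_{t^-}$ and $\theta_t\widehat\varphi_t$ terms regroup into $X^*_t\big(r_t+\pi^*_t(\nu_t-r_t)\big)-c^*_t$ plus the residual $\lambda^\bG_t\Phi_{t^-}(1-q)\big(1-(1+\tilde\varphi_t/\Phi_{t^-})^{1-p}+\gamma^*_t\big)$, which vanishes by the second identity. Hence $dX^*_t=X^*_t\big[(r_t+\pi^*_t(\nu_t-r_t))\,dt+\pi^*_t\sigma_t\,dB_t\big]-c^*_t\,dt$. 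This regrouping, and in particular the disappearance of every $\lambda^\bG$-term, is the only non-mechanical step, and it is precisely where the exact choice \eqref{optimalgamma} of $\gamma^*$ and the exact form \eqref{genedual} of the generator are used.

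It remains to match the initial value and to check admissibility. We have $X^*_0=(\eta^*)^{q-1}\Phi_0$, and since $(\eta^*)^{q-1}=(\eta^*)^{1/(p-1)}=x\big/\bE\big[\int_0^T(R_tL^{\gamma^*}_t)^q\,dt+(R_TL^{\gamma^*}_T)^q\big]$ while, by \eqref{eq:J:dual} and the $\bG$-martingale property of $J^{(d)}(\gamma^*)$ (Proposition \ref{prop:verification:martingale}), $\bE\big[\int_0^T(R_tL^{\gamma^*}_t)^q\,dt+(R_TL^{\gamma^*}_T)^q\big]=\bE\big[J^{(d)}_T(\gamma^*)\big]=J^{(d)}_0(\gamma^*)=\Phi_0$, we obtain $X^*_0=x$. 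Thus $X^*$ is the unique solution of \eqref{wealtheq} associated to $(\pi^*,c^*)$, proving \eqref{wealth:process:X}. Finally, $\pi^*$ is $\bG$-predictable because $\widehat\varphi$, $\Phi_{\cdot^-}$, $\theta$ and $\sigma$ are; $\bE\int_0^T|\pi^*_s\sigma_s|^2\,ds<\infty$ because $\widehat\varphi\in\Hh^2_\bG(0,T)$, $\Phi\ge C>0$ by Theorem \ref{theoreme existence}, and $\theta$ is bounded by \textbf{(H3)}--\textbf{(H4)}; $c^*_t\ge0$ is obvious; and $X^*_t=(\eta^*R_tL^{\gamma^*}_t)^{1/(p-1)}\Phi_t>0$ since $\eta^*,R_t,L^{\gamma^*}_t>0$ and $\Phi_t\ge C>0$. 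Hence $(\pi^*,c^*)\in\A(x)$.
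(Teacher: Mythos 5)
Your proposal is correct and follows essentially the same route as the paper: verify via It\^o's formula that $(\eta^*R L^{\gamma^*})^{1/(p-1)}\Phi$ solves the wealth SDE driven by $(\pi^*,c^*)$, identify the initial value $x$ through the martingale property of $J^{(d)}(\gamma^*)$, invoke uniqueness of the linear SDE, and check the admissibility conditions from $\widehat\varphi\in\Hh^2_\bG(0,T)$, $\Phi\ge C>0$ and the bounds on $r,\nu,\sigma$. You merely carry out explicitly the cancellation of the jump term and of the $\lambda^\bG$-terms that the paper leaves as an unstated computation.
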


\begin{proof}
Using  assumptions
${\bf (H3)}$ and ${\bf (H4)}$, and the properties of $(\Phi,  \widehat{\varphi} ,  \tilde{\varphi})$ given by Theorem \ref{theoreme existence}, we obtain that
%Since $\widehat{\varphi}$
%belonging to $\mathcal{H}^2_{\bG}(0,T)$,
%and there exists a constant $C$ such that $\Phi_t \geq C>0$, $\pi^*$ satisfies
$\bE (\int_0^T |\pi_s^*\sigma_s|^2 ds)<\infty$ and $\pi^*$ is $\bG$-predictable. Moreover, from (\ref{wealtheq}), the wealth process $X^{x, \pi^*, c^*}$ associated to the strategy $(\pi^*, c^*)$ is defined by the SDE
%\begin{eqnarray*}
%    dX^*_t = X^*_t\Big[ (r_t-(q-1)\theta_t^2+\theta_t\frac{\wh\varphi_t}{\Phi_{t^-}})dt + (\frac{\wh\varphi_t}{\Phi_{t^-}}-(q-1)\theta_t)dB_t\Big]-k\left(  R_t L_t^{\gamma^*} \right)^{\frac{1}{p-1}}dt
%\end{eqnarray*}
\begin{equation}\label{equation:sde:wealth}
\left\{\begin{array}{rcl}
X_0^{x, \pi^*, c^*} &= & x\;, \\
dX_t^{x, \pi^*, c^*} &=& X_t^{x, \pi^*, c^*}\Big[ \Big(r_t-(q-1)|\theta_t|^2+\theta_t\frac{\wh\varphi_t}{\Phi_{t^-}}\Big)dt + \Big(\frac{\wh\varphi_t}{\Phi_{t^-}}-(q-1)\theta_t\Big)dB_t\Big]\\
&&-\big(\eta^*  R_t L_{t^-}^{\gamma^*} \big)^{\frac{1}{p-1}}dt\;.
\end{array}\right.
\end{equation}
From Proposition \ref{prop:verification:martingale}, the process $\int_0^.( R_s L^{\gamma^*}_s)^q ds +  ( R L^{\gamma^*})^q \Phi\;$
%From the construction of $\{(J^{(d)}_t(\gamma))_{t \in [0,T]} ~: ~\gamma \in \Gamma\}$, we have $(\int_0^t (R_s L^{\gamma^*}_s)^q ds + (R_t L^{\gamma^*}_t)^q \Phi_t)_{t \in [0,T]}$
is a $\mathbb{G}$-martingale, which implies
\begin{equation*}
\Phi_0 = \bE\Big[\int_0^T(R_t L_t^{\gamma^*})^qdt+( R_T L_T^{\gamma^*})^q\Big]\;.
\end{equation*}
From the previous equality and (\ref{optlambda}), we remark that
$
(\eta^*)^{\frac{1}{p-1}}\Phi_0 = x\;.
$
Using It\^o's formula and (\ref{BSDEPhi0}), we check that $(\eta^*RL^{ \gamma^*})^{\frac{1}{p-1}}\Phi$ is a solution of the SDE (\ref{equation:sde:wealth}). Moreover, this SDE admits a unique solution. Therefore, we have
 \begin{eqnarray}\label{EqDuality21}
        X_t^{x, \pi^*, c^*} = (\eta^*R_tL_t^{ \gamma^*})^{\frac{1}{p-1}}\Phi_t\;.
    \end{eqnarray}
    %Since $\Phi_T = 1$, (\ref{wealth:process:X}) follows.
    Using the fact that $c_t^* \geq 0$ and $X_t^{x, \pi^*, c^*} >0$ for any $t \in [0,T]$, we conclude the proof. %we conclude that $(\pi^*, c^*)$ is admissible.
     In particular, $ (\eta^*R_TL_T^{ \gamma^*})^{\frac{1}{p-1}}= X_T^{x, \pi^*, c^*}$ is hedgeable.
\end{proof}
We now prove Proposition \ref{Prop:Existence:OptimalSolution}.
\begin{proof}
  	
From (\ref{eq:primal:dual:relation:1}), we obtain
\begin{eqnarray*}
	\sup_{(\pi,c) \in \A(x)}\bE \Big[ \int_0^T U(c_s) ds + U(X^{x, \pi, c}_T) \Big] \leq \inf_{\eta>0, \gamma \in \Gamma} \Big(-\frac{\eta^q}{q}\bE \Big[ \int_0^T (R_s L_s^{\gamma})^q ds + (R_T L_T^{\gamma})^q \Big] + \eta x\Big)\;.  %\\
\end{eqnarray*}	
By the definition of $\gamma^*$ and $\eta^*$, the previous inequality is equivalent to
\begin{eqnarray}	\label{eq:optimality:A}
	\sup_{(\pi,c) \in \A(x)}\bE \Big[ \int_0^T U(c_s) ds + U(X^{x, \pi, c}_T) \Big]	\leq
	%-\frac{(\eta^*)^q}{q}\bE \Big[ \int_0^T (R_s L_s^{\gamma^*})^q ds + (R_T L_T^{\gamma^*})^q \Big] + \eta^* x\\
	%&=&
	\frac{x^p}{p}\Big(\bE \Big[ \int_0^T (R_s L_s^{\gamma^*})^q ds + (R_T L_T^{\gamma^*})^q \Big]\Big)^{1-p}\;.
\end{eqnarray}
By definition of $(\pi^*, c^*)$ and Lemma \ref{Lemma:Admissible:Strategy}, we remark that
\begin{eqnarray}\label{eq:optimality:B}
	\bE \Big[ \int_0^T U(c_s^*) ds + U(X^{x, \pi^*, c^*}_T) \Big] = \frac{x^p}{p}\Big(\bE \Big[ \int_0^T (R_s L_s^{\gamma^*})^q ds + (R_T L_T^{\gamma^*})^q \Big]\Big)^{1-p}\;.
\end{eqnarray}
Since $(\pi^*, c^*)$ is admissible, from (\ref{eq:optimality:A}) and (\ref{eq:optimality:B}), we obtain

\begin{eqnarray*}%\label{Proof:Optimality}
	\bE \Big[ \int_0^T U(c_s^*) ds + U(X^{x, \pi^*, c^*}_T) \Big] = \sup_{(\pi,c) \in \A(x)}\bE \Big[ \int_0^T U(c_s) ds + U(X^{x, \pi, c}_T) \Big] \;.
\end{eqnarray*}
Therefore, $(\pi^*, c^*)$ is an optimal solution of the primal problem (\ref{pb}).
\end{proof}

We now characterize the value function associated to the primal problem using the dynamic programming principle.  For fixed $t \in [0, T]$  we denote by $(\pi^{t}, c^{t})$ a strategy defined on the time interval  $[t, T]$  and $(X_s^{t, x, \pi^{t}, c^{t}})_{s\in[t, T]}$ the wealth process associated to this strategy given the initial value at time $t$ is $x>0$. We first define  the set of control for a fixed $t\leq T$.

\begin{Definition}%\label{DefinitionAdmissibleStrategies}
The set $\A_t(x)$ of admissible strategies $(\pi^{t}, c^{t})$ from time $t$  consists in the set
of $\bG$-predictable processes $(\pi^{t}_s, c^{t}_s)_{s\in[t, T]}$ such that
$\bE (\int_t^T
|\pi^{t}_s\sigma_s|^2 ds)<\infty$, $c^{t}_s \geq 0$ and $X^{t, x,\pi^{t},c^{t}}_s  > 0$ for any $s \in [t,T]$.

\end{Definition}
%\begin{Definition}%\label{DefinitionAdmissibleStrategies}
%The set $\Gamma_t$ consists in the set of $\bG$-predictable processes $\gamma$ such
%that $\gamma_{t'} > -1$ for any $t'\in [t,T]$.
%\end{Definition}

%We define the value function at time
%$t \leq T$ for the primal problem as follows% (\textcolor{red}{Je pense que c'est mieux de d\'efinir $V$ et $\wt V$ avant, on ne sait pas ici non plus que signifie $V$ et $\wt V$}):
%\begin{eqnarray}\label{PrimalValueFunction5}
%    V(t, x)   := \frac{x^p}{p} {\Psi}_t (x)\;,
%%    V(t, X_t^{x, \pi, c})   &=& \int_0^t \frac{c_s^p}{p}ds+ \frac{(X_t^{x, \pi, c})^p}{p} {\Psi}_t\\
%%   \wt V (  t, \eta R_tL_t^{\gamma})    &=& -\frac{1}{q}\Big(\int_0^t (\eta R_sL_s^{\gamma})^qds+ (\eta R_tL_t^{\gamma})^q {\Phi}_t\Big)
%\end{eqnarray}
%where
%\begin{eqnarray*}
%     {\Psi}_t (x)   := \frac{1}{x^p}\esssup_{(\pi^{t}, c^{t})\in\mathcal{A}_t(x)}\bE\Big[\int_t^T (c_s^{t})^pds+(X_T^{t, x, \pi^{t}, c^{t}})^p\Big\vert\mathcal{G}_t\Big]\;.%\\&=&\sup_{\pi\in\mathcal{A}_t}E\Big[    \exp \left(\int_t^T (r_s  + \pi_s(\nu_s - r_s) ) ds \right) \;
%%     \exp \left( \int_t^T   \pi_s \sigma_s d B_s -\frac 12   \pi_s^2  \sigma_s ^2 ds\right) \Big]
%%  \\    {\Phi}_t    &=& \inf_{\gamma\in\Gamma_t}E\Big[\left(\frac{R_TL_T^{\gamma}}{R_tL_t^{\gamma}}\right)^q\vert\mathcal{G}_t\Big]
%\end{eqnarray*} %DE}FINIR $\mathcal{A}(t,x)$ et $\Gamma_t$.
%%According to Kramkov and Schachermayer \cite{KS},
We define the value function at time
$t \leq T$ for the primal problem as follows
\begin{eqnarray}\label{PrimalValueFunction5}
    V(t, x)   := \frac{x^p}{p} {\Psi}_t (x)\;,
\end{eqnarray}
where
\begin{eqnarray*}
     {\Psi}_t (x)   := \frac{1}{x^p}\esssup_{(\pi^{t}, c^{t})\in\mathcal{A}_t(x)}\bE\Big[\int_t^T (c_s^{t})^pds+(X_T^{t, x, \pi^{t}, c^{t}})^p\Big\vert\mathcal{G}_t\Big]\;.
\end{eqnarray*} 
For any $(\pi^{t}, c^{t})\in\mathcal{A}_t(x)$, we define the strategy $(\hat\pi^{t},  \hat c^{t})$ by $\hat\pi^{t} := \pi^{t}$ and $\hat c^{t} := c^{t}/x$.
We remark that $(\hat\pi^{t}, \hat c^{t})\in\mathcal{A}_t(1)$ and $X_T^{t, x, \pi^{t}, c^{t}} = x X_T^{t, 1, \hat\pi^{t}, \hat c^{t}}$ from (\ref{wealtheq}). Combining the previous relations with the definition of ${\Psi}_t (x)$, we obtain ${\Psi}_t (x) = {\Psi}_t (1)$. For the sake of brevity, we shall denote ${\Psi}_t$ instead of ${\Psi}_t (1)$. The value function at time $t\leq T$ can be rewritten as follows $V(t, x)   = x^p\Psi_t/p$.
From (\ref{PrimalValueFunction5}) and Proposition \ref{Prop:Existence:OptimalSolution}, we have
\begin{eqnarray}\label{Eq:Existence:OptimalSolution}
V(0, x) = \bE\Big[\int_0^T \frac{(c_s^*)^p}{p}ds+ \frac{(X_T^{x, \pi^*, c^*})^p}{p}\Big] = V(x)\;.
\end{eqnarray}
Using dynamic control techniques, we derive the following characterization of the value function. 
\begin{Proposition}\label{PropositionDynamicProgrammingDuale}
For any $(\pi, c)\in\A(x)$, $\int_0^. \frac{(c_s)^p}{p}ds+V(., X^{x, \pi, c})$ is a
$\mathbb{G}$-supermartingale
and there exists $(\pi^*, c^*)\in\A(x)$ such that $\int_0^. \frac{(c_s^*)^p}{p}ds+V(., X^{x, \pi^*, c^*})$ is a $\mathbb{G}$-martingale.
\end{Proposition}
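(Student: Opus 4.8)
The plan is to establish the two assertions — the supermartingale property for arbitrary admissible $(\pi,c)$ and the martingale property for the optimal $(\pi^*,c^*)$ — by exploiting the scaling identity $\Psi_t(x)=\Psi_t$ already derived in the excerpt, together with the dual characterization and the martingale/submartingale structure of $J^{(d)}(\gamma)$ from Proposition \ref{prop:verification:martingale}. First I would rewrite the process to be studied in a tractable form. Since $V(t,x)=x^p\Psi_t/p$, for a fixed admissible $(\pi,c)\in\A(x)$ we have
\begin{equation*}
\int_0^t \frac{(c_s)^p}{p}\,ds + V(t,X^{x,\pi,c}_t) = \int_0^t \frac{(c_s)^p}{p}\,ds + \frac{(X^{x,\pi,c}_t)^p}{p}\,\Psi_t\;,
\end{equation*}
so the claim is about the $\bG$-supermartingale property of $t\mapsto \int_0^t (c_s)^p\,ds + (X^{x,\pi,c}_t)^p\Psi_t$ (up to the harmless factor $1/p>0$). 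The natural candidate for $\Psi$ in closed form is $\Psi_t = \big(\Phi_t\big)^{1-p}\big(\eta\text{-type constant}\big)$; more precisely, by the duality computation leading to \eqref{eq:optimality:A}–\eqref{eq:optimality:B} and Lemma \ref{unicite lemme 2}, one expects
\begin{equation*}
\Psi_t = \Big(\frac{1}{(R_tL^{\gamma^*}_t)^q}\,\bE\Big[\int_t^T (R_sL^{\gamma^*}_s)^q\,ds + (R_TL^{\gamma^*}_T)^q\Big|\G_t\Big]\Big)^{1-p} = \big(R_t^{-q}(L^{\gamma^*}_t)^{-q}\big)^{1-p}\,\Phi_t^{1-p}\cdot(\text{const})\;,
\end{equation*}
i.e. $\Psi_t$ is, up to a multiplicative constant, $(R_t)^{-q(1-p)}(L^{\gamma^*}_t)^{-q(1-p)}\Phi_t^{1-p}$. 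I would first justify this representation of $\Psi_t$ rigorously, using the scaling reduction and the dynamic version of the duality bound applied on $[t,T]$.

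Second, for the \textbf{supermartingale} direction, I would combine the pathwise superreplication bound \eqref{SuperMartingale:WealthProcess} (in its conditional, dynamic form on $[t,T]$, valid for every $\gamma\in\Gamma$ and every admissible $(\pi,c)$) with the conjugate inequality \eqref{relation:conjugate:UtoUtilde}. Concretely, the argument that gives \eqref{eq:primal:dual:relation:1} localizes: for $s\le t$,
\begin{equation*}
\bE\Big[\int_s^T U(c_u)\,du + U(X^{x,\pi,c}_T)\,\Big|\,\G_s\Big] \le \bE\Big[\int_s^T \wt U(\eta R_u L^{\gamma}_u)\,du + \wt U(\eta R_T L^{\gamma}_T)\,\Big|\,\G_s\Big] + \eta\,\bE\Big[R_sL^\gamma_s X^{x,\pi,c}_s + \cdots\Big]
\end{equation*}
and optimizing over $\eta$ and $\gamma$ yields $\int_0^t U(c_s)\,ds + V(t,X^{x,\pi,c}_t) \ge \bE[\int_0^T U(c_s)\,ds + U(X_T)\mid\G_t]$ pointwise in $t$; taking conditional expectations and using the tower property gives the supermartingale inequality. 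An alternative, perhaps cleaner, route is to write everything in terms of $J^{(d)}(\gamma^*)$: using $X^{x,\pi,c}_t \le$ (dual bound) and Young's inequality $U(x)\le \wt U(y)+xy$ with the specific choice $y = \eta^* R_t L^{\gamma^*}_t$ that achieves equality along $(\pi^*,c^*)$, one identifies $\int_0^t U(c_s)ds + V(t,X^{x,\pi,c}_t)$ as dominated by a $\bG$-martingale plus the running term, hence a supermartingale; integrability is controlled exactly as in \eqref{J:bounded} via Lemma \ref{uniforme integrable} and $\Phi\in\Ss^\infty_\bG$.

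Third, for the \textbf{martingale} direction, I would substitute the explicit optimal wealth $X^{x,\pi^*,c^*}_t = (\eta^* R_t L^{\gamma^*}_t)^{1/(p-1)}\Phi_t$ from Lemma \ref{Lemma:Admissible:Strategy} and the explicit $c^*_t = (\eta^* R_t L^{\gamma^*}_{t^-})^{1/(p-1)}$. Then
\begin{equation*}
\int_0^t \frac{(c^*_s)^p}{p}\,ds + \frac{(X^{x,\pi^*,c^*}_t)^p}{p} = \frac{(\eta^*)^{q}}{p}\Big(\int_0^t (R_sL^{\gamma^*}_s)^q\,ds + (R_tL^{\gamma^*}_t)^q\Phi_t^{\,?}\Big)
\end{equation*}
— here one must be careful: $(X^{x,\pi^*,c^*}_t)^p = (\eta^*)^{p/(p-1)}(R_tL^{\gamma^*}_t)^{p/(p-1)}\Phi_t^p = (\eta^*)^q (R_tL^{\gamma^*}_t)^q\Phi_t^p$, and $(c^*_s)^p = (\eta^*)^q(R_sL^{\gamma^*}_s)^q$, so this equals $\frac{(\eta^*)^q}{p}\big(\int_0^t (R_sL^{\gamma^*}_s)^q ds + (R_tL^{\gamma^*}_t)^q\Phi_t^p\big)$. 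To match $J^{(d)}(\gamma^*) = \int_0^t (R_sL^{\gamma^*}_s)^q ds + (R_tL^{\gamma^*}_t)^q\Phi_t$ one needs $\Psi_t$ to absorb a factor $\Phi_t^{1-p}$; this is precisely the representation of $\Psi_t$ established in step one, and with it the process becomes a positive constant multiple of $J^{(d)}(\gamma^*)$, which is a $\bG$-martingale by Proposition \ref{prop:verification:martingale}. Hence the martingale claim follows, and $(\pi^*,c^*)$ from \eqref{optwealth} is the witnessing strategy.

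\textbf{Main obstacle.} The delicate point is establishing rigorously the closed-form dynamic representation $\Psi_t = (\text{const})\cdot(R_tL^{\gamma^*}_t)^{-q(1-p)}\Phi_t^{1-p}$ — equivalently, proving the \emph{dynamic} duality equality $V(t,x) = \frac{x^p}{p}\big(\essinf_{\gamma\in\Gamma}\frac{1}{(R_tL^\gamma_t)^q}\bE[\int_t^T(R_sL^\gamma_s)^q ds + (R_TL^\gamma_T)^q\mid\G_t]\big)^{1-p}$, which requires that the duality gap vanishes at every time $t$, not merely at $t=0$. This needs the dynamic (conditional) version of the superreplication inequality \eqref{SuperMartingale:WealthProcess} together with a conditional attainment argument showing $(\pi^*,c^*)$ restricted to $[t,T]$ is still optimal for the problem started at $X^{x,\pi^*,c^*}_t$; the scaling property $\Psi_t(x)=\Psi_t$ and the already-proven optimality at time $0$ (Proposition \ref{Prop:Existence:OptimalSolution}), combined with Lemma \ref{unicite lemme 2}, are the tools that make this go through, but assembling them carefully with the right integrability (via Lemma \ref{uniforme integrable}) is where the real work lies.
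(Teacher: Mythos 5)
The paper does not actually prove this proposition: it simply invokes the general dynamic programming results of El Karoui \cite{EK} (lattice property of the family of conditional gains, measurable selection). Your plan is therefore a genuinely different, self-contained verification via duality, and in outline it can be made to work; but as written it has two concrete defects.

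First, an algebra slip that propagates. From the martingale property of $J^{(d)}(\gamma^*)$ one has $\bE\big[\int_t^T(R_sL_s^{\gamma^*})^q ds+(R_TL_T^{\gamma^*})^q\,\big|\,\G_t\big]=(R_tL_t^{\gamma^*})^q\Phi_t$, so the quantity you raise to the power $1-p$ is exactly $\Phi_t$ and the correct identity is $\Psi_t=\Phi_t^{1-p}$ (as in Proposition \ref{Prop:Primal:BSDE:Relation}); your extra factor $(R_tL_t^{\gamma^*})^{-q(1-p)}$ is spurious and contradicts your own Step 3, which only closes if $\Psi_t=\Phi_t^{1-p}$. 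Second, and more seriously, the supermartingale step is logically incomplete: knowing $N_t\ge\bE[N_T\,|\,\G_t]$ for every $t$ (where $N=\int_0^\cdot U(c_s)ds+V(\cdot,X^{x,\pi,c})$) does not make $N$ a supermartingale, and neither does domination by a martingale plus a running term. You must compare arbitrary times $s\le t$. This can be done within your framework: bound $V(t,X_t)=\frac{X_t^p}{p}\Phi_t^{1-p}$ and $U(c_u)$ from above by Young's inequality with the dual variable $y=\eta\, R_uL_u^{\gamma^*}/(R_sL_s^{\gamma^*})$ for $\G_s$-measurable $\eta>0$, take $\bE[\,\cdot\,|\,\G_s]$, use the \emph{martingale} property of $J^{(d)}(\gamma^*)$ on $[s,t]$ (the submartingale property for general $\gamma$ points the wrong way here, because the coefficient $-\eta^q/q$ in front of the conditional expectation is positive) together with the budget supermartingale \reff{SuperMartingale:WealthProcess} in conditional form, and finally minimize over $\eta$. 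Alternatively, apply It\^o to $\frac{(X^{x,\pi,c})^p}{p}\Phi^{1-p}$ and check that the drift is bounded above by $-(c_t)^p/p$ via pointwise maximization over $(\pi,c)$. Either route also requires the dynamic duality identity $V(t,x)=\frac{x^p}{p}\Phi_t^{1-p}$ at every $t$ (upper bound by conditional duality, lower bound by attainment with the rescaled $(\pi^*,c^*)$ on $[t,T]$), which you correctly flag as the main obstacle --- but note that you cannot simply quote Proposition \ref{Prop:Primal:BSDE:Relation} for it, since in the paper that proposition is itself deduced from the present one.
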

\noindent The proof of this proposition is given in El Karoui \cite{EK}.\\

Using these properties, we can characterize the value function with a BSDE. %The process  $\Psi$ is defined by $\Psi_t = \Psi_t(X_t^{x, \pi^*, c^*})$ for any $t\in[0, T]$.
\begin{Proposition}\label{Prop:Primal:BSDE:Relation}
The process $\Psi$ satisfies the equality $\Psi= \Phi^{1-p}$. Moreover, the process $\Psi$ is  solution of the BSDE%and the following BSDE
 %\textcolor{red}{R\'e\'ecrire la condition terminale de $\Psi$}
%$$d\Psi_t=    \left(-pr_t \Psi_t+\frac 12 \frac {p}{p-1}( \theta^2 \Psi_t-\frac{\wh\psi_t^2}{\Psi_t})-
%p\lambda_t^\bG\Psi_t +\frac{p}{p-1} \theta_t \wh \psi_t \right)dt
%+\wh \psi_t dW_t+\wt \psi_t dM_t$$ \end{Proposition} \mb{VERIFIER,
%J AI DU OUBLIER UN $\wt \psi$ dans le drift}
%\textcolor{red}{
%\begin{equation} \label{BSDEPhi0}
%\begin{split}
%\Phi_t & = 1 - \int_t^T \Big( \big(q r_s - \frac{1}{2}q (q-1)  |\theta_s|^2  +
%(1-q) \lambda_s^\bG \big) \Phi_{s} + q \theta_s \widehat{\varphi}_s  + \lambda_s ^\bG\tilde{\varphi}_s \\
%& \quad  - (1-q) \lambda_s^\bG  (\Phi_{s} +
%\tilde{\varphi}_s)^{1-p} \;  \Phi_{s}^p -1 \Big) ds - \int_t^T
%\widehat{\varphi}_s d B_s - \int_t^T \tilde{\varphi}_s d M_s\;,
%\end{split}
%\end{equation}
\begin{equation}\label{Primal:BSDE:Relation}
\begin{split}
        \Psi_t &=   1-\int_t^T \Big(-(1-p)\Psi_s^q-pr_s   \Psi_s+\frac 12 \frac {p}{p-1}( |\theta_s|^2 \Psi_s+\frac{\wh\psi_s^2}{\Psi_s}) +\frac{p}{p-1} \theta_s \wh \psi_s   \Big)ds\\
&\quad-\int_t^T\wh \psi_s dB_s-\int_t^T\wt \psi_s dH_s\;.
\end{split}
\end{equation}
%}
%\textcolor{red}{CHECK}
%\textcolor{red}{\`a v\'erifier, je ne suis pas tr\`es s\^ur
%$$d\Psi_t=    \left(-pr_t \Psi_t+\frac 12 \frac {p}{p-1}( \theta^2 \Psi_t+\frac{\wh\psi_t^2}{\Psi_t})+((2-
%p)\Psi_t+2\wt \psi)\lambda_t^\bG +\frac{p}{p-1} \theta_t \wh \psi_t \right)dt
%+\wh \psi_t dW_t+\wt \psi_t dM_t$$}
%$$d\Psi_t=    \left(-pr_t \Psi_t+\frac 12 \frac {p}{p-1}( \theta^2 \Psi_t+\frac{\wh\psi_t^2}{\Psi_t}) +\frac{p}{p-1} \theta_t \wh \psi_t \right)dt
%+\wh \psi_t dB_t+\wt \psi_t dM_t$$}

%\nhn{$B$ or $W$ Brownian motion}
\end{Proposition}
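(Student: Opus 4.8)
The statement has two parts, which I would establish in turn: the pointwise identity $\Psi_t=\Phi_t^{1-p}$, and then the BSDE for $\Psi$, obtained from \eqref{BSDEPhi0} by the change of variables $y\mapsto y^{1-p}$.

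For the identity, I would combine the dynamic programming principle of Proposition \ref{PropositionDynamicProgrammingDuale} with the explicit optimizer $(\pi^*,c^*)$ of Proposition \ref{Prop:Existence:OptimalSolution}. First, the process $N_t:=\int_0^t\frac{(c_s^*)^p}{p}ds+V(t,X^{x,\pi^*,c^*}_t)$ is not merely a $\bG$-supermartingale but a $\bG$-martingale: $N_0=V(0,x)=V(x)$ is deterministic, while by \eqref{eq:optimality:B} and \eqref{Eq:Existence:OptimalSolution} one has $\bE[N_T]=\bE[\int_0^T\frac{(c_s^*)^p}{p}ds+\frac{(X^{x,\pi^*,c^*}_T)^p}{p}]=V(x)$, and a supermartingale whose terminal expectation equals its deterministic initial value is a martingale; the needed integrability $\bE[\sup_{t\in[0,T]}|N_t|]<\infty$ follows from $\Phi\in\mathcal{S}^{\infty,+}_{\bG}(0,T)$, the two-sided bounds on $R$, and Lemma \ref{uniforme integrable}, since $V(t,X^{x,\pi^*,c^*}_t)\le C\,(L^{\gamma^*}_t)^q$. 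Taking $\bE[\,\cdot\mid\mathcal{G}_t]$ of $N_T-N_t$, using $V(T,x)=x^p/p$, $\Phi_T=1$ and the scaling $V(t,\xi)=\frac{\xi^p}{p}\Psi_t$ for $\mathcal{G}_t$-measurable $\xi>0$, gives
\begin{equation*}
\frac{(X^{x,\pi^*,c^*}_t)^p}{p}\,\Psi_t=\bE\Big[\int_t^T\frac{(c_s^*)^p}{p}ds+\frac{(X^{x,\pi^*,c^*}_T)^p}{p}\,\Big|\,\mathcal{G}_t\Big].
\end{equation*}
Substituting $c^*_s=(\eta^*R_sL^{\gamma^*}_{s^-})^{1/(p-1)}$ and $X^{x,\pi^*,c^*}_s=(\eta^*R_sL^{\gamma^*}_s)^{1/(p-1)}\Phi_s$ from Lemma \ref{Lemma:Admissible:Strategy}, and using $p/(p-1)=q$, this becomes
\begin{equation*}
(R_tL^{\gamma^*}_t)^q\,\Phi_t^p\,\Psi_t=\bE\Big[\int_t^T(R_sL^{\gamma^*}_s)^qds+(R_TL^{\gamma^*}_T)^q\,\Big|\,\mathcal{G}_t\Big].
\end{equation*}
By Proposition \ref{prop:verification:martingale} the process $J^{(d)}(\gamma^*)=\int_0^\cdot(R_sL^{\gamma^*}_s)^qds+(RL^{\gamma^*})^q\Phi$ is a $\bG$-martingale, so the right-hand side equals $(R_tL^{\gamma^*}_t)^q\Phi_t$; dividing by $(R_tL^{\gamma^*}_t)^q>0$ and by $\Phi_t^p$ (recall $\Phi\ge C>0$) yields $\Psi_t=\Phi_t^{1-p}$, which extends to indistinguishability since both sides are c\`adl\`ag. (Alternatively, ``$\le$'' can be obtained from a conditional version of the duality inequality \eqref{eq:primal:dual:relation:1} together with Lemmas \ref{unicite lemme 1}--\ref{unicite lemme 2}, and ``$\ge$'' from the candidate strategy.)

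For the BSDE, I would apply the It\^o formula for semimartingales with one jump to $\Psi_t=f(\Phi_t)$, $f(x)=x^{1-p}$, using \eqref{BSDEPhi0}. Writing $d\Phi_t=b_t\,dt+\widehat\varphi_t\,dB_t+\widetilde\varphi_t\,dH_t$ with $b$ the drift in \eqref{BSDEPhi0}, and noting that $\Phi$ jumps only at $\tau$ with $\Delta\Phi_\tau=\widetilde\varphi_\tau$, one gets
\begin{equation*}
d\Psi_t=\Big((1-p)\Phi_{t^-}^{-p}b_t-\tfrac12 p(1-p)\Phi_{t^-}^{-p-1}\widehat\varphi_t^2\Big)dt+(1-p)\Phi_{t^-}^{-p}\widehat\varphi_t\,dB_t+\big[(\Phi_{t^-}+\widetilde\varphi_t)^{1-p}-\Phi_{t^-}^{1-p}\big]dH_t.
\end{equation*}
Then I would set $\widehat\psi_t:=(1-p)\Phi_{t^-}^{-p}\widehat\varphi_t$ and $\widetilde\psi_t:=(\Phi_{t^-}+\widetilde\varphi_t)^{1-p}-\Psi_{t^-}$ (the jump size of $\Psi$), and use the elementary identities $q=\frac{p}{p-1}$, $(1-p)q=-p$, $(1-p)(1-q)=1$, $\Phi_{t^-}^{-p}=\Psi_{t^-}^{q}$ and $\widehat\varphi_t/\Phi_{t^-}=\frac{1}{1-p}\widehat\psi_t/\Psi_{t^-}$. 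A term-by-term comparison then reproduces the coefficients of \eqref{Primal:BSDE:Relation}: the contributions in $r$, $|\theta|^2$, $\theta\widehat\psi$, $\widehat\psi^2/\Psi$ and $\Psi^q$ match the generator directly, while the two $\lambda^\bG$-terms of \eqref{BSDEPhi0} combine, through $(\Phi_{t^-}+\widetilde\varphi_t)^{1-p}=\Psi_{t^-}+\widetilde\psi_t$, with the It\^o jump correction — the resulting $-\lambda^\bG_t\widetilde\psi_t\,dt$ being reconciled with $\widetilde\psi_t\,dH_t$ through $dH_t=dM_t+\lambda^\bG_t\,dt$, i.e. the jump part of the equation is naturally written against the compensated martingale $M$. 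The regularity is inherited from Theorem \ref{theoreme existence}: $\Psi=\Phi^{1-p}\in\mathcal{S}^{\infty,+}_{\bG}(0,T)$ since $\Phi$ is bounded away from $0$ and above, $\widehat\psi\in\mathcal{H}^2_{\bG}(0,T)$ since $\Phi^{-p}$ is bounded and $\widehat\varphi\in\mathcal{H}^2_{\bG}(0,T)$, and $\widetilde\psi\in\mathcal{H}^2_{\bG}(M)$ because $\widetilde\psi$ is bounded ($\Phi_{t^-}+\widetilde\varphi_t=\Phi^1_t(t\wedge\tau)\in[1,C]$ and $\Phi_{t^-}$ is bounded).

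The main obstacle is the first step: one must ensure that the explicit optimizer of Proposition \ref{Prop:Existence:OptimalSolution} is precisely the strategy rendering the dynamic-programming process a $\bG$-martingale, and one must carefully marry the primal martingale property (Proposition \ref{PropositionDynamicProgrammingDuale}) with the dual one (Proposition \ref{prop:verification:martingale}), under the uniform integrability supplied by Lemma \ref{uniforme integrable}. Once $\Psi=\Phi^{1-p}$ is in hand, the BSDE is a purely computational change of variables, the only delicate point being the correct bookkeeping of the jump compensator.
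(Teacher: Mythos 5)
Your proposal is correct and follows essentially the same route as the paper: first derive $\Psi=\Phi^{1-p}$ by combining the primal martingale property of $\int_0^\cdot\frac{(c^*_s)^p}{p}ds+V(\cdot,X^{x,\pi^*,c^*})$ with the dual martingale property of $J^{(d)}(\gamma^*)$ and the explicit formulas for $c^*$ and $X^{x,\pi^*,c^*}$, then obtain the BSDE by It\^o's formula applied to $\Phi^{1-p}$ with $\wh\psi=(1-p)\Phi_{-}^{-p}\wh\varphi$ and $\wt\psi=(\Phi_{-}+\wt\varphi)^{1-p}-\Phi_{-}^{1-p}$. Two of your refinements are worth keeping. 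First, you justify that the dynamic-programming process along the \emph{specific} strategy $(\pi^*,c^*)$ of Proposition \ref{Prop:Existence:OptimalSolution} is a $\bG$-martingale (supermartingale with constant expectation, plus the integrability from Lemma \ref{uniforme integrable}); the paper simply invokes Proposition \ref{PropositionDynamicProgrammingDuale}, which only asserts existence of \emph{some} optimizing strategy. Second, your bookkeeping of the jump compensator is right and in fact exposes a slip in the displayed equation: since $(1-p)(1-q)=1$, the two $\lambda^\bG$-terms of \eqref{BSDEPhi0} contribute $-\lambda^\bG_t\wt\psi_t\,dt$ to the drift of $\Psi$, so what It\^o's formula yields is $d\Psi_t=g_t\,dt+\wh\psi_t\,dB_t+\wt\psi_t\,dM_t$ with $g$ the generator of \eqref{Primal:BSDE:Relation}; written against $dH_t$ as in \eqref{Primal:BSDE:Relation}, the generator should therefore carry an extra $-\lambda^\bG_s\wt\psi_s$. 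The paper's own proof (and the reverse change of variables in Theorem \ref{theoreme unicite}) silently drops this same term, so this is a typo in the statement rather than a gap in your argument; you were right to flag that the jump part is naturally written against the compensated martingale $M$.
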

\begin{proof}
From (\ref{optwealth}), (\ref{wealth:process:X}) and (\ref{PrimalValueFunction5}), we have
\begin{eqnarray*}
	\int_0^T \frac{(c_s^*)^p}{p}ds + V(T, X_T^{x, \pi^*, c^*})% &=& \int_0^T \frac{(c_s^*)p}{p}ds+ \frac{(X_T^{x, \pi^*, c^*})^p}{p}\\
					= \frac{(\eta^*)^q}{p}\Big( \int_0^T (R_sL_s^{\gamma^*})^qds + (R_TL_T^{\gamma^*})^q\Phi_T\Big)\;.%\\
					%= \frac{(\eta^*)^q}{p} R_T^{(d)}(\gamma^*)
\end{eqnarray*}
From Propositions  \ref{prop:verification:martingale} and \ref{PropositionDynamicProgrammingDuale}, the process $\int_0^. \frac{(c_s^*)^p}{p}ds+V(., X_.^{x, \pi^*, c^*})$ and $\int_0^.( R_s L^{\gamma^*}_s)^q ds +  ( R L^{\gamma^*})^q \Phi\;$ are $\mathbb{G}$-martingales. Therefore, taking the conditional expectation for the above equality, we obtain
\begin{eqnarray*}
	\int_0^t (c_s^*)^pds+ (X_t^{x, \pi^*, c^*})^p\Psi_t
					%&=& \frac{(\eta^*)^q}{p} R_t^{(d)}(\gamma^*)\\
					&=& (\eta^*)^q\Big( \int_0^t (R_sL_s^{\gamma^*})^qds + (R_tL_t^{\gamma^*})^q\Phi_t\Big)\;.
\end{eqnarray*}
Since $c_t^* =  (  \eta^*R_t L_t^{\gamma^*} )^{\frac{1}{p-1}}$, the following relation
    holds for any $t\in[0, T]$
    \begin{eqnarray}\label{EqDuality11bis}
        (X^{x, \pi^*, c^*}_t)^p {\Psi_t}=(\eta^*)^q( R_tL_t^{ \gamma*})^q\Phi_t\;.
    \end{eqnarray}
    Therefore, from (\ref{EqDuality21}) and (\ref{EqDuality11bis}), we obtain
    \begin{eqnarray*}
        \Phi_t^{1-p} =  {\Psi_t}\;.
    \end{eqnarray*}
  Applying It\^o's formula to $\Phi^{1-p}$, we obtain
\begin{equation*} %\label{BSDEPhi0}
\begin{split}
d\Phi_t^{1-p} & = (1-p) \Phi_{t^-}^{-p}\Big( \big(q r_t - \frac{1}{2}q (q-1)  |\theta_t|^2  + \lambda^\bG_t (1-q)\big) \Phi_{t} + q \theta_t \widehat{\varphi}_t     -(1-q)\lambda^\bG_t (\Phi_{t^-} + \tilde \varphi_t)^{1-p} \Phi_t^p\\
& \quad -1 -\frac{1}{2}p\Phi_t^{-1}\widehat{\varphi}_t^2 \Big) dt
 + (1-p)
\wh \varphi_t \Phi_{t^-}^{-p} d B_t + \Big((\Phi_{t^-}+\wt
\varphi_t)^{1-p}-\Phi_{t^-}^{1-p}\Big) d H_t\;.
\end{split}
\end{equation*}
  Setting $\wh \psi=(1-p)
\wh \varphi \Phi^{-p}$ and $ \wt \psi =(\Phi+\wt
\varphi)^{1-p}-\Phi^{1-p}\;,$ we get that $(\Psi, \wh \psi, \wt \psi)$ satisfies (\ref{Primal:BSDE:Relation}).
\end{proof}

The above result is not sufficient to characterize the value function of the primal problem since it is not obvious that the BSDE \reff{Primal:BSDE:Relation} admits a unique solution. But thanks to the uniqueness of the solution of the BSDE \reff{BSDEPhi0} we get the following characterization of the value function of the primal problem.
\begin{Theorem}\label{theoreme unicite}
$\Psi$ is the unique solution of the BSDE \reff{Primal:BSDE:Relation} in $\mathcal{S}^{\infty,+}_{\bG}(0, T) \times
\mathcal{H}^2_{\bG}(0,T)   \times
 \mathcal{H}^2_{\bG}(M)$.
\end{Theorem}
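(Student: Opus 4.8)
The plan is to transfer uniqueness from the dual BSDE \reff{BSDEPhi0}, which was already established via Lemmas \ref{unicite lemme 1} and \ref{unicite lemme 2}, to the primal BSDE \reff{Primal:BSDE:Relation} through the change of variables $\Psi=\Phi^{1-p}$. Existence of a solution is already in hand: Proposition \ref{Prop:Primal:BSDE:Relation} exhibits $(\Psi,\wh\psi,\wt\psi)$ with $\Psi=\Phi^{1-p}$, $\wh\psi=(1-p)\wh\varphi\,\Phi^{-p}$, $\wt\psi=(\Phi+\wt\varphi)^{1-p}-\Phi^{1-p}$, and since $\Phi\in\mathcal S^{\infty,+}_{\bG}(0,T)$ with $\Phi_{t^-}+\wt\varphi_t\ge 1$, the triple lies in $\mathcal S^{\infty,+}_{\bG}(0,T)\times\mathcal H^2_{\bG}(0,T)\times\mathcal H^2_{\bG}(M)$; note in particular $\Psi$ is bounded below by a positive constant and $\Psi_{t^-}+\wt\psi_t=(\Phi_{t^-}+\wt\varphi_t)^{1-p}\ge 1$. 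So only uniqueness needs an argument.

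First I would take an arbitrary solution $(\Psi,\wh\psi,\wt\psi)\in\mathcal S^{\infty,+}_{\bG}(0,T)\times\mathcal H^2_{\bG}(0,T)\times\mathcal H^2_{\bG}(M)$ of \reff{Primal:BSDE:Relation} and invert the transformation: set $\Phi:=\Psi^{1/(1-p)}$, $\wh\varphi:=\tfrac{1}{1-p}\Psi^{p/(1-p)}\wh\psi$ (so that $\wh\psi=(1-p)\wh\varphi\Phi^{-p}$), and $\wt\varphi:=(\Psi+\wt\psi)^{1/(1-p)}-\Psi^{1/(1-p)}$. Because $\Psi\ge c>0$ and $\Psi_{t^-}+\wt\psi_t\ge$ a positive constant (one needs here that any solution automatically satisfies $\Psi_{t^-}+\wt\psi_t>0$, which follows from the admissibility-type bounds built into the derivation, or can be imposed as part of the solution class exactly as $\Phi_{t^-}+\wt\varphi_t\ge1$ was in Theorem \ref{theoreme existence}), all these maps are $C^1$ with bounded derivatives away from zero, so $(\Phi,\wh\varphi,\wt\varphi)$ is again in $\mathcal S^{\infty,+}_{\bG}(0,T)\times\mathcal H^2_{\bG}(0,T)\times\mathcal H^2_{\bG}(M)$. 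Then I would apply It\^o's formula to $\Phi=\Psi^{1/(1-p)}$ using the dynamics \reff{Primal:BSDE:Relation}; the computation is exactly the one in the proof of Proposition \ref{Prop:Primal:BSDE:Relation} run backwards, and it shows that $(\Phi,\wh\varphi,\wt\varphi)$ solves the dual BSDE \reff{BSDEPhi0}, with $\Phi_{t^-}+\wt\varphi_t=(\Psi_{t^-}+\wt\psi_t)^{1/(1-p)}\ge1$.

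By the uniqueness statement for \reff{BSDEPhi0} established at the end of Section \ref{sous section unite} (via Lemmas \ref{unicite lemme 1} and \ref{unicite lemme 2}), this $(\Phi,\wh\varphi,\wt\varphi)$ coincides with the solution produced in Theorem \ref{theoreme existence}. Inverting the change of variables once more forces $\Psi=\Phi^{1-p}$, $\wh\psi=(1-p)\wh\varphi\Phi^{-p}$ and $\wt\psi=(\Phi+\wt\varphi)^{1-p}-\Phi^{1-p}$, i.e.\ the original solution agrees with the one exhibited in Proposition \ref{Prop:Primal:BSDE:Relation}. This proves uniqueness in the stated class and completes the proof.

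The main obstacle is the bookkeeping around the jump term: one must be careful that the transformation $x\mapsto x^{1/(1-p)}$ applied to $\Psi$ correctly reproduces the $dH$-coefficient of \reff{BSDEPhi0}, i.e.\ that $(\Psi_{t^-}+\wt\psi_t)^{1/(1-p)}-\Psi_{t^-}^{1/(1-p)}$ equals the $\wt\varphi_t$ appearing there, and that the positivity $\Psi_{t^-}+\wt\psi_t>0$ (needed for $\wt\varphi$ to be well defined and for $\Phi_{t^-}+\wt\varphi_t\ge1$) genuinely holds for \emph{every} solution in the class, not just the constructed one — this is the point where one should either invoke the structural lower bound inherited from the derivation or, more cleanly, incorporate the constraint $\Psi_{t^-}+\wt\psi_t\ge1$ into the definition of the solution class exactly as was done for $\Phi$ in Theorem \ref{theoreme existence}. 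Everything else is a routine application of It\^o's formula and the already-proven dual uniqueness.
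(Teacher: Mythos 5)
Your proposal is correct and follows essentially the same route as the paper: the paper's proof takes an arbitrary solution $(y,z,u)$ of \reff{Primal:BSDE:Relation}, sets $Y=y^{1-q}=y^{1/(1-p)}$, $Z=(1-q)y^{-q}z$, $U=(y_{-}+u)^{1-q}-y_{-}^{1-q}$ (exactly your inverse change of variables), checks via It\^o's formula that $(Y,Z,U)$ solves \reff{BSDEPhi0}, and concludes by the dual uniqueness from Subsection \ref{sous section unite}. The only difference is that you explicitly flag the need for a lower bound on $\Psi_{t^-}+\wt\psi_t$ for the jump transformation to be well defined, a point the paper's proof passes over in silence.
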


\begin{proof}
Let $(y,z,u) \in  \mathcal{S}^{\infty,+}_{\bG}(0, T) \times
\mathcal{H}^2_{\bG}(0,T)   \times
 \mathcal{H}^2_{\bG}(M)$ be a solution of the BSDE \reff{Primal:BSDE:Relation}. We define $Y_t:= y^{1-q}_t$, $Z_t:= (1-q) y_t^{-q} z_t$ and $U_t:=(y_{t^-} + u_t)^{1-q} - y^{1-q}_{t^-}$ for any $t \in [0,T]$. From It\^o's formula, we get that 
\begin{eqnarray*}
 dY_t &= &\Big[   \big( qr_t -\frac{q(q-1)}{2} |\theta_t|^2 + (1-q)\lambda^\bG_t \big) Y_t  - (1-q)\lambda^\bG_t  (U_t + Y_t)^{1-p}Y_t^p\\&& + q \theta_t Z_t - 1\Big]dt+ Z_t dB_t + U_t dH_t\;.\end{eqnarray*}
Therefore, $(Y,Z,U)$ is solution of the BSDE \reff{BSDEPhi0} and, from Subsection \ref{sous section unite}, we have by uniqueness of the solution $Y=\Phi$ and from Proposition \ref{Prop:Primal:BSDE:Relation} we get that $y=\Psi$.
\end{proof}

\begin{Remark}
From Theorems \ref{theoreme existence} and \ref{theoreme unicite} and Proposition \ref{Prop:Primal:BSDE:Relation}, we can conclude that the BSDE (\ref{Primal:BSDE:Relation}), which is associated to the primal problem (\ref{pb}), admits a unique solution $(\Psi, \wh \psi, \wt \psi)$ belonging to $\mathcal{S}^{\infty, +}_{\bG}(0, T) \times
\mathcal{H}^2_{\bG}(0,T)   \times
 \mathcal{H}^2_{\bG}(M)$. Solving this BSDE directly is not evident because of the terms $\Psi^q$ and $\frac{\wh\psi^2}{\Psi}$.
\end{Remark}

\begin{Remark} We point out that, in the case where the coefficients of the model are deterministic functions  of some external economic factors and in a Brownian setting, the optimal control processes $(\pi^*, c^*)$ have the same expressions that  those  obtained by Casta\~{n}eda-Leyva and Hern\'{a}ndez-Hern\'{a}ndez \cite{CH} (see Proposition 3.1. in \cite{CH}). \\
We also remark that,  in a default-density setting,  the optimal control processes $(\pi^*, c^*)$ have the same expressions that  those  obtained by  Jiao and Pham \cite{JP}.  In these two papers, the optimal portfolio is given in terms of the value function of the primal problem or  in terms of the solution of the primal BSDE as $$\pi^*_t = \frac{1}{(1-p)\sigma_t}\Big(\frac{\widehat \psi_t}{\Psi_{t^-}} +\theta_t\Big)\;.$$ We have proved that $\frac{1}{1-p}\frac{\widehat \psi_t}{\Psi_{t^-}}= \frac{\widehat \varphi_t}{ \Phi_ {t^-}}$, hence, the two solutions have the same form.
In particular, after $\tau$, our setting is that one of a complete market, and our formula is rather standard. In particular, in the case where $r^1$ and $\theta^1$ are deterministic, the investor is myopic, and  the optimal portfolio is  $\pi^* = \frac{1}{1-p}  \frac{\theta}{\sigma}$. Before $\tau$, the investor takes into account the fact that the interest rate will change, since the after default value function appears in the before default value function (this is the term $ \Phi^1_t(t)$ in the associated BSDE). \end{Remark}

\section{Conclusion}
In this paper, we have studied  the problem of maximization of expected power utility of both terminal wealth and consumption  in a market with a stochastic interest rate in a model where immersion holds. We have  derived the optimal strategy solving    the associated dual problem. Then, we have given  the link between the value functions associated to the primal and dual problems, which has allowed to characterize the value function of the primal problem by a BSDE.

 If one assumes that $B$ is a $\bG$-semi-martingale with  canonical decomposition of $B$ in $\bG$  of the  form
\begin{eqnarray*}
B_t = B_t^\bG + \int_0^t\mu_sds\;,
\end{eqnarray*}
with a bounded process $\mu$ and $B^\bG$ a $\bG$-Brownian motion, the price dynamics of the risky asset can be rewritten as follows
\begin{equation*}
d S_t = S_t \Big((\nu_t+\sigma_t\mu_t) dt  + \sigma_t d B_t^\bG\Big) \;,
\end{equation*}
where the coefficients $\nu$ and $\sigma$  can be chosen   $\bG$-predictable and bounded.
In this case, the e.m.m.  can be written on the form (indeed, a predictable representation property holds  for the pair $B^\bG,M$)
$$dL_t=  L_{t^-}(a_tdB^\bG _t+\gamma_t dM_t)$$ and
using the same methods and arguments, we can obtain similar results. The real difficulty is that one has to assume that the process $\mu$ is bounded, and we do not know any condition on $\tau$ which implies that fact.

Without any theoretical difficulty, we can generalize this paper to the case where there are several  ordered changes of regime of interest rate.


\begin{thebibliography}{9}
%\bibitem{BR} Bielecki, T. R., Rutkowski, M. (2002) {\it Credit Risk: Modeling, Valuation and Hedging},  Springer Finance.


\bibitem{BR} Bauerle, N.,
 Rieder, U. (2004)
Portfolio Optimization with Markov-modulated stock
prices and interest rates Automatic Control, IEEE Trans {\bf 49} (3), 442 - 447.

%\bibitem{CV} Callegaro, G., Vargiolu, T. (2009) Optimal portfolio for HARA utility functions in a pure jump multidimensional incomplete market, {\it Int. J. Risk Assessment and Management} {\bf 11}, Nos. 1/2, 180--200.

%\bibitem{CJZ} Callegaro, G., Jeanblanc, M., Zargari, B. (2011) Carthagian enlargement of filtrations,  to appear in {\it ESAIM}.

\bibitem{CH}   Casta\~{n}eda-Leyva, N.,   Hern\'{a}ndez-Hern\'{a}ndez,
D. (2005), Optimal consumption-investment problems in incomplete markets with stochastic coefficients, {\it SIAM J. Control and Optim.} {\bf 44} (4), 1322--1344.

\bibitem{Cheridito} Cheridito, P., Hu, Y. (2011), Optimal consumption and investment in incomplete markets with general constraints, {\it
Stochastics and Dynamics} {\bf 11},  283--299.
%\bibitem{DM} Dellacherie, C., Meyer, P.-A. (1980) Probabilit\'es et potentiel. Th\'eorie des martingales. Chapitres V \`a VIII, {\it Hermann}, vol 1385.

\bibitem{CH89} Cox, J., Huang, C.F. (1989) Optimal consumption and portfolio policies when asset
prices follow a diffusion process, {\it Journal of Economic Theory} {\bf 49}, 33--83.

\bibitem{EK} El Karoui, N. (1981) Les aspects probabilistes du contr\^ole stochastique, {\it Lecture Notes in Mathematics} {\bf 816}, 725--753, Springer Verlag.

%\bibitem{EJJZ} El Karoui, N., Jeanblanc, M., Jiao, Y., Zargari, B. (2010) Conditional default probability and density, 18 pages, to appear.

%\bibitem{EHM} El Karoui, N., Hamad\`ene, S., Matoussi, A. (2008) Backward Stochastic Differential Equations and Applications, {\it Chapter 8, Book "Indifference Pricing: Theory and Application" edited by Carmona, R.}, Springer-Verlag, 267-320.

%\bibitem{EQ} El Karoui, N., Quenez, M.-C. (1995) Dynamic programming and pricing of contingent claims in an incomplete market, {\it SIAM J. Control and Optim.} {\bf 33}, 29--66.

%\bibitem{EPQ} El Karoui, N., Peng, S., Quenez, M.-C. (1997) Backward Stochastic Differential Equations in Finance, {\it Mathematical Finance} {\bf 7}, 1--71.

\bibitem{FanJiang} Fan, S.J., Jiang, L. (2010) Uniqueness result for the BSDE whose generator is monotonic in $y$ and uniformly continuous in $z$, {\it C. R. Acad. Sci. Paris} {\bf 348}, 89--92.

\bibitem{HIM} Hu, Y., Imkeller, P., M\"{u}ller, M. (2004) Utility maximization in incomplete markets, {\it Annals of Applied Probability} {\bf 15}, 1691--1712.

\bibitem{Jeulin} Jeulin, T. (1980) Semi-martingales et grossissement d'une filtration, {\it Lecture Notes in Mathematics}, {\bf 833}, Springer.

\bibitem{JP} Jiao, Y., Pham, H. (2011) Optimal investment with counterparty risk: a default density model approach, {\it Finance and Stochastics} {\bf 15}, 725--753.

\bibitem{KLS} Karatzas, I., Lehoczky, J.P., Shreve, S. (1987) Optimal portfolio and consumption decisions for a ``small investor" on a finite horizon, {\it SIAM J. Control and Optim.} {\bf 25},  1557--1586.




\bibitem{KSDual} Karatzas, I., Shreve, S. (1998) Methods of Mathematical Finance, Springer Verlag, Berlin.

%\bibitem{KSh} Karatzas, I., Shreve, S. (1999) Brownian Motion and Stochastic Calculus, $2^{nd}$ edition, Graduate Texts in Mathematics, Springer Verlag.

%\bibitem{Ka} Kazamaki, N. (1979) A sufficient condition for the uniform integrability of exponential martingales, {\it Mathematics Report Toyama University}, 1-11.
%
%\bibitem{Ka94} Kazamaki, N. (1994) Continuous exponential martingales and BMO, {\it Lecture Notes in Math}, {\bf 1579}, Springer, Berlin.

\bibitem{KL} Kharroubi, I., Lim, T. (2011) Progressive enlargement of filtrations and backward SDEs with jumps, forthcoming in {\it Journal of Theoretical Probability}.

%\bibitem{Ko} Kobylanski, M. (2000) Backward Stochastic Differential Equations  and Partial Differential Equations with Quadratic Growth, {\it Annals of Applied Probability} {\bf 28}, 552--602.

%\bibitem{KR} Kramkov, D. (1996) Optional decomposition of super-martingales and hedging contingent claims in incomplete security markets, {\it Probability Theory and Related Field} {\bf 105}, 459--479.

\bibitem{KS} Kramkov, D., Schachermayer, W. (1999) The asymptotic elasticity of utility functions and optimal investment in incomplete markets, {\it Annals of Applied Probability} {\bf 9} (3), 904--950.

\bibitem{Ku} Kusuoka, S. (1999) A remark on default risk models, {\it Adv. Math. Econom.} {\bf 1}, 69--82.

%\bibitem{LSM} Lepeltier, J.-P., San Martin, J. (1997) Backward stochastic differential equations with continuous coefficient, {\it Statistics \& Probability Letters} {\bf 32}, 425--430.

%\bibitem{LM} Lepingle, D., M\'emin, J. (1978) Sur l'int\'egrabilit\'e uniforme des martingales exponentielles, {\it Z. Wahrscheinlichkeitstheorie verw. Gebeite} {\bf 42}, 175--203.

%\bibitem{LQ} Lim, T., Quenez, M.-C. (2011) Portfolio optimization in a default model under full/partial information, submitted.

%\bibitem{L} Luenberger, D. G. (1969) {\it Optimization by Vector Space Methods}, New York Wiley.

\bibitem{M}  Menoncin, F. (2012)  A class of incomplete markets with optimal
portfolio in closed form, Working paper.

%\bibitem{Pham} Pham, H. (2009) Continuous-time Stochastic Control and Optimization with Financial Applications, {\it Series Stochastic Modelling and Applied Probability}, vol 61, Springer.

\bibitem{Ro} Rogers, L. C. G. (2003) Duality in constrained optimal investment and consumption problems: a synthesis, Paris Princeton Lectures on Mathematical Finance 2002, Lecture notes in Mathematics {\bf 1814}, 95-132, Springer.

%\bibitem{Roy} Royer, M. (2006) Backward stochastic differential equations with jumps and related non-linear expectations, {\it Stochastic Processes and their Applications} {\bf 116}, 1358--1376.


\end{thebibliography}
\end{document}